%% file: main.tex
\documentclass[
aps, pra,
11pt,
showkeys,
superscriptaddress,
nofootinbib,
floatfix,
]{revtex4-2}
\usepackage{pretex}

\begin{document}
\title{Trainability Beyond Linearity in Variational Quantum Objectives}

\author{Gordon Ma}
\thanks{Corresponding author}
\email{gordonma7@gmail.com}
\email{gordonma@nus.edu.sg}
\affiliation{Centre for Quantum Technologies, National University of Singapore, Singapore}

\author{Xiufan Li}
\email{lixiufan@u.nus.edu}
\affiliation{Centre for Quantum Technologies, National University of Singapore, Singapore}

\date{\today}
\begin{abstract}
\input{abstract}
\end{abstract}


\maketitle

\input{intro_rev_v3}

\input{Theorems/theorem_A_final_rev}

\input{Theorems/theorem_B_rev_final}

\input{Theorems/corollary_C}

\input{04-qml-case-study-final-v2}

\input{05-compressed-interface-v2} 

\input{06-hydro-case-study-rev2}

\input{07-discussion-rev}

\medskip

\begin{acknowledgments}
The authors thank Daniel Leykam for questions and discussions that motivated the numerical demonstration, and Supanut Thanasilp for discussions that sharpened the structural framing and helped motivate Section~V. G.M. is grateful to Dimitris Angelakis for supervision and guidance. This project is supported by the National Research Foundation, Singapore through the National Quantum Office, hosted in A*STAR, under its Centre for Quantum Technologies Funding Initiative (S24Q2d0009).
\end{acknowledgments}

\medskip

\paragraph*{Authors contributions.}
G.M. developed the framework, performed the numerical experiments, and wrote the manuscript. Both authors contributed to the intellectual development of the project.

\newpage

\bibliographystyle{unsrt}
\bibliography{references}

\newpage
\appendix
\input{Appendix/app-iff-rev}
\input{Appendix/app-elliptical}

\input{Appendix/app-04-experiment-details}

\end{document}

%% file: abstract.tex
Barren-plateau results have established exponential gradient suppression as a widely cited obstacle to the scalability of variational quantum algorithms. When and whether these results extend to a given objective has been addressed through loss-specific arguments, but a general structural characterization has remained open. We show that the objective itself admits a fixed-observable representation if and only if the loss is affine in the measured statistics, thereby identifying the exact boundary of the standard concentration-based proof template. Existing transfer results for non-affine losses achieve this reduction under additional assumptions; our characterization implies that such a reduction is not structurally available for a class of non-affine objectives, placing them outside the automatic reach of the existing proof template. Beyond the affine regime, a chain-rule decomposition reveals three governing factors---model responsivity, loss-side signal, and transmittance---and induces a loss-class dichotomy: bounded-gradient losses inherit suppression, while amplification-capable losses can in principle counteract it. In the exponentially wide setting, both classes fail, but for different structural reasons. When the interface is instead designed at polynomial width---exposing coarse-grained statistics rather than individual bitstring probabilities---the exponential-dimensional obstruction is relaxed and the dichotomy plays a genuine role. In a numerical demonstration on a charge-conserving quantum system, the amplification-capable objective produces resolved gradients several orders of magnitude larger than affine and Lipschitz baselines at comparable shot budgets. Over the tested interval, its scaling trend is statistically distinguished from the exponential trend of both alternatives. The boundary is affine; what lies beyond it is a representation-design problem.

%% file: intro_rev_v3.tex
\setlength\epigraphwidth{.45\textwidth}
\setlength\epigraphrule{0pt}
\epigraph{\itshape A cage went in search of a bird.}{---Franz Kafka, \textit{The Zürau Aphorisms}}

\section{Introduction}

Barren-plateau theorems concern gradients of expectation-value objectives \cite{mcclean_barren_2018, cerezo_cost_2021, larocca_barren_2025}. Usually the loss has the form
\begin{equation*}
L(\theta) = \tr\{H\,\rho(\theta)\},
\end{equation*}
for a fixed observable $H$ and each $\theta_k \in[0, 2\pi)$. In this setting each gradient component $\partial_{\theta_k} L$ can be written as the expectation of a fixed operator $H_k$. Under deep random circuits, matrix elements of fixed operators concentrate, and gradients decay exponentially. Despite this, many objectives are non-linear in measured statistics --- divergences \cite{kieferova_quantum_2021}, likelihoods \cite{coopmans_sample_2024, verdon_quantum_2019} , risk functionals \cite{barkoutsos_improving_2020}. Considerable effort has extended the linear proof machinery to cover specific non-linear losses, typically through bounded-sensitivity assumptions that allow gradient-level transfer from underlying linear expectations \cite{thanasilp_subtleties_2023}. These results are important but non-exhaustive: they establish inheritance for particular loss families under particular conditions, not a universal reduction. The broader question — for which objectives is the fixed-observable reduction structurally available, and what governs gradients when it is not — has remained largely implicit.

The structure of the paper is as follows. A fixed-observable representation exists if and only if the loss is affine in measured statistics, relative to a chosen measurement interface (Theorem~\ref{thm:escape}). Beyond the affine boundary, gradients are governed by a chain-rule decomposition into three factors: model responsivity, loss-side signal, and transmittance (Section~\ref{sec:geometry}). This framework induces a loss-class dichotomy. Bounded-gradient losses inherit suppression from the underlying Jacobian; losses whose feature-space gradients can grow without bound are not subject to the same inheritance and can, in principle, counteract it through the chain rule (Corollary~\ref{cor:loss-classes}). But non-linearity alone is not enough: when the measurement interface spans all $2^n$ outcomes, both loss classes fail --- the exponentially wide interface itself enforces suppression. Compressing the interface to polynomial width lifts the dimensional obstruction and brings the dichotomy into play. As a numerical demonstration, we showcase the capability of an amplification-capable objective on a charge-conserving quantum system, which produces materially different gradient behavior than affine and Lipschitz baselines, while model-side responsivity emerges as the binding constraint. This frames a representation-design question: under what conditions does the amplification regime become accessible for natural tasks?

%% file: Theorems/theorem_A_final_rev.tex

\section{The Structural Boundary}\label{sec:structural_boundary}

We formalize when the fixed-observable structure exists.

\paragraph{Setup.} Let \begin{equation*} F_j(\rho) := \tr(\rho\, O_j), \qquad j = 1,\dots,m, \end{equation*} and collect these expectations into \begin{equation*} F(\rho) := \big(F_1(\rho),\dots,F_m(\rho)\big) \in \mathbb{R}^m. \end{equation*} The map \(F\) encodes a design choice: it specifies which linear statistics of the quantum state the objective is allowed to depend on. Different choices of \(\{O_j\}\) expose different summaries of \(\rho\) to the classical loss. We refer to this chosen family of statistics as the \emph{measurement interface}. For a differentiable ansatz \(\rho(\theta)\) with parameters \(\theta \in \mathbb{R}^P\), define \begin{equation*} L(\theta) = f(F(\rho(\theta))). \end{equation*} We say that \(f\) is \emph{affine} on a region \(U \subset \mathbb{R}^m\) if there exist \(a \in \mathbb{R}^m\) and \(c \in \mathbb{R}\) such that \begin{equation*} f(F) = a^\top F + c \qquad \text{for all } F \in U. \end{equation*} Otherwise, \(f\) is \emph{non-affine} on \(U\). In the special case \(c=0\), affineness reduces to linearity. We use ``affine'' rather than ``linear'' because the fixed-observable template tolerates an additive constant without modification, and the distinction plays no structural role in what follows. Throughout, affineness and non-affineness are understood relative to the chosen measurement interface.


\begin{theorem}
\label{thm:escape}
Let $U \subset \mathbb{R}^m$ be a non-empty open set explored by the ansatz, i.e.\
\begin{equation*}
  U \;\subset\; \{F(\rho(\theta)) : \theta \in \mathbb{R}^P\}.
\end{equation*}
Then the following are equivalent:
\begin{enumerate}
\item
There exist $H \in \mathrm{span}\{O_1,\dots,O_m\}$ and $c \in \mathbb{R}$ such that
\begin{equation*}
  L(\theta) = \tr(H\,\rho(\theta)) + c
\end{equation*}
for all $\theta$ with $F(\rho(\theta)) \in U$.
\item
$f$ is affine on $U$, i.e.\ there exist $a \in \mathbb{R}^m$ and $c \in \mathbb{R}$
such that
\begin{equation*}
  f(F) = a^\top F + c \qquad \text{for all } F \in U.
\end{equation*}
\end{enumerate}
In particular, if $f$ is non-affine on $U$, then $L$ admits no fixed-observable
representation on $U$ relative to the chosen interface.
\end{theorem}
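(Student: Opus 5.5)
The proof splits into the two implications, with the direction (1)$\Rightarrow$(2) carrying the content. The key observation is that on the region of interest, $L$ depends on $\theta$ only through $F(\rho(\theta))$. Since $U$ lies in the image of $\theta \mapsto F(\rho(\theta))$, every point of $U$ is actually attained. This lets us transfer an identity in $\theta$ to an identity in $F$.

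\emph{(2)$\Rightarrow$(1).} Suppose $f(F) = a^\top F + c$ on $U$. Define $H := \sum_j a_j O_j \in \mathrm{span}\{O_1,\dots,O_m\}$. For any $\theta$ with $F(\rho(\theta)) \in U$, linearity of the trace gives
\begin{equation*}
L(\theta) = \sum_j a_j \tr(O_j\rho(\theta)) + c = \tr(H\rho(\theta)) + c .
\end{equation*}
This is immediate.

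\emph{(1)$\Rightarrow$(2).} Write $H = \sum_j a_j O_j$ for some $a \in \mathbb{R}^m$. If the $O_j$ are linearly dependent the coefficients are not unique, but any choice works. Taking $a$ real requires a remark: $H$ must be Hermitian for $\tr(H\rho)$ to be real. If complex coefficients arise, I replace $a$ by its real part. This is legitimate because the $O_j$ are Hermitian, so $\tr(O_j\rho)$ is real, and the imaginary part of $\sum_j a_j F_j$ must vanish whenever $L$ is real. Then for every $\theta$ with $F(\rho(\theta)) \in U$,
\begin{equation*}
f(F(\rho(\theta))) = L(\theta) = \sum_j a_j \tr(O_j \rho(\theta)) + c = a^\top F(\rho(\theta)) + c .
\end{equation*}
Now fix an arbitrary $F_0 \in U$. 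By the hypothesis $U \subset \{F(\rho(\theta))\}$, there is some $\theta_0$ with $F(\rho(\theta_0)) = F_0$. Substituting $\theta_0$ gives $f(F_0) = a^\top F_0 + c$. Since $F_0$ was arbitrary, $f$ is affine on $U$.

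The final sentence of the theorem is the contrapositive of (1)$\Rightarrow$(2). Openness of $U$ is not needed for the equivalence. I would remark that it serves only to make ``non-affine on $U$'' a robust, nondegenerate condition. For example, it makes $a$ unique when the $O_j$ are independent, since an affine function is determined on an open set.

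The only step that needs care is the real-coefficient point in (1)$\Rightarrow$(2), and the surjectivity hypothesis on $U$. Without surjectivity, the identity would hold only on the image, and $f$ would be unconstrained elsewhere in $U$. I would state explicitly that surjectivity onto $U$ is exactly what closes this gap.
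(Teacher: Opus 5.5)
Your proof is correct and follows the paper's argument: for (2)$\Rightarrow$(1) you set $H=\sum_j a_j O_j$, and for (1)$\Rightarrow$(2) you expand $H$ in the $O_j$ and use that every point of $U$ is attained by some $\theta$ (the paper phrases this as a contradiction, but the content is the same). Your side observations are also accurate additions the paper does not spell out: openness of $U$ is never used, and complex coefficients can be replaced by their real parts because each feature $F_j$ is real.
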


\begin{proof}
The direction (2)$\Rightarrow$(1) is immediate: define $H := \sum_j a_j O_j$ and
evaluate. We prove the contrapositive of (1)$\Rightarrow$(2).

Suppose, for contradiction, that $f$ is non-affine on $U$ and there exist
$H \in \mathrm{span}\{O_1,\dots,O_m\}$ and $c \in \mathbb{R}$ such that
$L(\theta) = \tr(H\,\rho(\theta)) + c$ for all $\theta$ with $F(\rho(\theta)) \in U$.
Write $H = \sum_{j=1}^m a_j O_j$ for some $a \in \mathbb{R}^m$.  Then for every such
$\theta$,
\begin{equation*}
  f(F(\rho(\theta)))
  \;=\; L(\theta)
  \;=\; \tr(H\,\rho(\theta)) + c
  \;=\; \sum_{j=1}^m a_j\,\tr(O_j\,\rho(\theta)) + c
  \;=\; a^\top F(\rho(\theta)) + c.
\end{equation*}
Because the ansatz explores $U$, every $F \in U$ is attained as $F(\rho(\theta))$
for some $\theta$.  Hence
\begin{equation*}
  f(F) = a^\top F + c \qquad \text{for all } F \in U,
\end{equation*}
so $f$ is affine on $U$, contradicting the assumption.
\end{proof}

\paragraph{Scope.}
We restrict throughout to observables fixed \emph{a priori}---the setting of concentration-based barren-plateau proofs.  Theorem~\ref{thm:escape} identifies the exact boundary of that template relative to the chosen measurement interface: fixed-observable structure exists exactly in the affine regime.  Non-affine losses lie on the other side of that boundary.

Allowing parameter-dependent observables moves to a different regime: such operators can be constructed post hoc to reproduce local gradients, but they do not provide structural counterexamples to concentration-based plateau theory (Appendix~\ref{app:iff}, Remark~\ref{rem:theta_dependent_obs}).

\paragraph{Remark.}
A convenient sufficient condition for the open-region hypothesis is that the feature Jacobian $J_F(\theta) := \partial F / \partial \theta(\theta)$ have full row rank on an open parameter region; then $F$ is locally open onto feature space, so the ansatz explores an open set of features. Appendix~\ref{app:iff} records the interface-relative formalization and related scope remarks.



\paragraph{Prior work and the structural boundary.}
Prior work around barren plateaus can be read through four interlinked layers. First, a large mitigation literature studies how concentration can be avoided \emph{within} the standard fixed-observable template, through shallow circuits, local observables, symmetry-preserving ans\"atze, and systematic initialization strategies~\cite{cerezo_cost_2021,uvarov_barren_2021,meyer_exploiting_2023,grant_initialization_2019, lerch_iqp_2026}. Second, a deeper concern has emerged about this program: many provably BP-free constructions appear to owe their trainability to restriction to classically manageable polynomial operator subspaces, effective shallowness, or other classically tame descriptions, while others exhibit superpolynomially few good local minima despite the absence of plateaus, calling into question whether provable BP-freedom inside the affine template delivers a quantum advantage at all~\cite{cerezo_does_2025,bermejo_quantum_2026,anschuetz_quantum_2022,mele_noise-induced_2026}. Third, work on non-linear losses divides into two strands. One strand derives conditional transfer results showing that barren-plateau scaling of underlying linear expectation values carries over to broad classes of non-linear losses under bounded-sensitivity assumptions; this is most explicit in Thanasilp et al., whose framework covers a broad separable class but does not characterize when such a transfer is structurally available in general~\cite{thanasilp_subtleties_2023}. The other strand provides empirical and theoretical evidence that particular non-linear objectives — Rényi-based losses, relative-entropy QBM training, and free-energy or thermal-state objectives — can train favorably in regimes where standard linear-template reasoning is incomplete or does not straightforwardly apply, but these remain objective-specific analyses rather than a general structural characterization~\cite{kieferova_quantum_2021,coopmans_sample_2024,verdon_quantum_2019}. Fourth, recent work identifies practical obstacles beyond the proof template itself: exact-form objectives can become uninformative under finite-shot estimation, and measured quantities can be concentrated in ways that no classical post-processing can rescue~\cite{rudolph_trainability_2024,saem_pitfalls_2025}. 

Between these strands sits a broader perspective from adjacent quantum machine learning work: trainability is necessary but not sufficient for usefulness. Recent work emphasizes that successful learning depends crucially on the inductive bias encoded by the parametrization, and that quantum learning models are often best understood at the level of their induced representations or feature spaces; in some cases, the resulting learned behavior may already admit classically tame surrogate descriptions~\cite{jerbi_quantum_2023,meyer_exploiting_2023,schreiber_classical_2023}. The unit of analysis thus shifts from the loss alone to the representation—or interface—through which the model is exposed to the loss.

Theorem~\ref{thm:escape} addresses the structural question that sits upstream of these layers: when does the objective itself admit a fixed-observable representation relative to a chosen measurement interface? It shows that this reduction is available if and only if the objective is affine on the explored region. In this sense, the theorem does not compete with the mitigation, simulability, inheritance, or estimation results above; it identifies the boundary that determines which of them are in play.

\paragraph{Beyond the boundary.}
In the affine regime, \(f(F)=a^\top F+c\), so \(\nabla_F f(F)=a\) is constant and
$\nabla_\theta L(\theta)=J_F(\theta)^\top a.$
Beyond the fixed-observable regime, this constant vector is replaced by the
state-dependent quantity \(\nabla_F f(F(\theta))\), so gradients are governed by
\begin{equation*}
  \nabla_\theta L(\theta)=J_F(\theta)^\top \nabla_F f(F(\theta)),
\end{equation*}
whose three factors---model responsivity, loss-side signal, and gradient
transmittance---are developed next.

%% file: Theorems/theorem_B_rev_final.tex
\section{Gradients in the Non-Linear Regime}\label{sec:geometry}
We now make the three factors precise. 
 
\subsection{The chain rule}
For a differentiable ansatz \(\rho(\theta)\) with parameters \(\theta\in\mathbb R^P\), write
\[
F(\theta):=F(\rho(\theta))\in\mathbb R^m
\]
for the features of the chosen observables, and
\[
J_F(\theta):=\frac{\partial F}{\partial \theta}(\theta)\in\mathbb R^{m\times P}
\]
for the Jacobian of the map \(\theta\mapsto F(\theta)\). For a loss of the form $L(\theta)=f(F(\theta))$, the gradient separates cleanly as
\begin{equation}
\label{eq:chain_rule}
    \nabla_\theta L(\theta)
    = J_F(\theta)^\top g_F(\theta),
    \qquad
    g_F(\theta):=\nabla_F f\big(F(\theta)\big).
\end{equation}
From this point onward, $\|\cdot\|$ denotes the Euclidean norm on vectors unless otherwise stated. 
\begin{lemma}[Chain-rule upper bound]
\label{lem:chain-rule-upper}
For any $\theta$,
\begin{equation*}
    \|\nabla_\theta L(\theta)\|
    = \|J_F(\theta)^\top g_F(\theta)\|
    \;\le\;
    \sigma_{\max}\!\big(J_F(\theta)\big)\,
    \|g_F(\theta)\|.
\end{equation*}
\end{lemma}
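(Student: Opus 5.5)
The plan is to combine the chain-rule identity \eqref{eq:chain_rule} with the operator-norm bound for a matrix acting on a vector. The equality in the statement is \eqref{eq:chain_rule} itself. It holds whenever $f$ is differentiable at $F(\theta)$ and $\theta\mapsto F(\theta)$ is differentiable, which is the standing assumption for the ansatz. Differentiability of $F(\theta)$ follows from that of $\rho(\theta)$, because each $F_j(\theta)=\tr(\rho(\theta)O_j)$ is a linear functional of $\rho(\theta)$. The multivariable chain rule gives $\partial_{\theta_k}L=\sum_j \partial_{F_j}f\,\partial_{\theta_k}F_j$, and this is exactly the $k$-th component of $J_F(\theta)^\top g_F(\theta)$.

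For the inequality, I would use the fact that the spectral norm of a real matrix $A$ is $\|A\|_{2\to2}=\sup_{\|x\|=1}\|Ax\|=\sigma_{\max}(A)$. Taking $A=J_F(\theta)^\top$ and $x=g_F(\theta)$ gives $\|J_F(\theta)^\top g_F(\theta)\|\le\sigma_{\max}(J_F(\theta)^\top)\,\|g_F(\theta)\|$. If $g_F(\theta)=0$, both sides vanish and there is nothing to prove.

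The only point that needs a sentence is that $\sigma_{\max}(J_F^\top)=\sigma_{\max}(J_F)$. I would justify this via the singular value decomposition $J_F=U\Sigma V^\top$, which gives $J_F^\top=V\Sigma^\top U^\top$ with the same nonzero singular values. Equivalently, $J_FJ_F^\top$ and $J_F^\top J_F$ share their nonzero eigenvalues. This also dispenses with any concern about the rectangular shape $m\times P$.

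There is no real obstacle here. The step needing the most care is the regularity hypothesis: differentiability of $f$ at $F(\theta)$ must be assumed pointwise for the chain rule to apply. I would state this explicitly so that ``for any $\theta$'' is read as ``for any $\theta$ at which $L$ is differentiable via \eqref{eq:chain_rule}''.
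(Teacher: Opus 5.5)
Your proposal is correct and follows essentially the same route as the paper: apply the operator-norm bound $\|Ax\|\le\sigma_{\max}(A)\|x\|$ with $A=J_F(\theta)^\top$ and $x=g_F(\theta)$, then use that $J_F$ and $J_F^\top$ share their singular values. Your added justification of the chain-rule identity and the pointwise differentiability caveat are harmless refinements that the paper leaves implicit.
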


\begin{proof}
This is the Euclidean operator-norm bound:
\[
\|A x\| \le \sigma_{\max}(A)\,\|x\|
\]
for any matrix $A$. Applying this with $A=J_F(\theta)^\top$ and
$x=g_F(\theta)$, and using that $J_F(\theta)$ and $J_F(\theta)^\top$
have the same singular values, gives the claim.
\end{proof}

The Jacobian $J_F$ captures model responsivity; the vector $g_F$ captures the loss-side signal. The third factor---how effectively that signal aligns with directions the model can move---is made precise next.

\subsection{Transmittance}
Let $\sigma_{\max}(J_F(\theta))$ denote the largest singular value of
$J_F(\theta)$, and let $u_{\max}(\theta)$ be any corresponding unit left singular vector.
This $u_{\max}$ identifies the direction in which the feature map is most responsive to
changes in $\theta$. We define the \emph{gradient transmittance}

\begin{equation}
\label{eq:transmittance}
    \mathcal{T}(\theta)
    := \frac{|\langle g_F(\theta),\,u_{\max}(\theta)\rangle|}
            {\|g_F(\theta)\|}
    \in[0,1],
\end{equation}
with the convention $\mathcal{T}(\theta)=0$ if $g_F(\theta)=0$.
That is, \(\mathcal{T}\) measures the cosine overlap between the loss-side signal and the model’s most responsive feature-space direction. When \(\mathcal{T}\approx 1\), the loss-side signal is transmitted efficiently into parameter space. When \(\mathcal{T}\approx 0\), the model is responsive, but mostly in directions that do not align with the loss.

\begin{theorem}[Chain-rule amplification]
\label{thm:amplification}
For any $\theta$ with $g_F(\theta)\neq 0$,
\begin{equation*}
    \|\nabla_\theta L(\theta)\|
    = \|J_F(\theta)^\top g_F(\theta)\|
    \;\ge\;
    \sigma_{\max}\!\big(J_F(\theta)\big)\;
    \mathcal{T}(\theta)\;
    \|g_F(\theta)\|.
\end{equation*}
\end{theorem}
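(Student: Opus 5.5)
The plan is to lower-bound $\|J_F^\top g_F\|$ by testing it against a single well-chosen unit vector in parameter space, namely the right singular vector paired with $u_{\max}$. Fix $\theta$ with $g_F(\theta)\neq 0$ and suppress the $\theta$-dependence. Write a singular value decomposition of $J_F\in\mathbb R^{m\times P}$ and let $v_{\max}\in\mathbb R^P$ be a unit right singular vector satisfying
\begin{equation*}
J_F v_{\max} = \sigma_{\max}(J_F)\,u_{\max},
\end{equation*}
where $u_{\max}$ is the left singular vector used in the definition \eqref{eq:transmittance}. Such a $v_{\max}$ exists whenever $\sigma_{\max}>0$: take $v_{\max}:=J_F^\top u_{\max}/\sigma_{\max}$, which is a unit vector because $u_{\max}$ is an eigenvector of $J_FJ_F^\top$ with eigenvalue $\sigma_{\max}^2$. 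If $\sigma_{\max}(J_F)=0$, the right-hand side vanishes and the inequality is trivial, so I will dispose of that case first.

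Next, by Cauchy--Schwarz and $\|v_{\max}\|=1$,
\begin{equation*}
\|J_F^\top g_F\| \;\ge\; |\langle J_F^\top g_F, v_{\max}\rangle| \;=\; |\langle g_F, J_F v_{\max}\rangle| \;=\; \sigma_{\max}(J_F)\,|\langle g_F,u_{\max}\rangle|.
\end{equation*}
Since $g_F\neq 0$, the definition of transmittance gives $|\langle g_F,u_{\max}\rangle| = \mathcal T\,\|g_F\|$. Substituting this in yields the claim. The equality $\|\nabla_\theta L\|=\|J_F^\top g_F\|$ is just the chain rule \eqref{eq:chain_rule}.

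The only point requiring care is that $u_{\max}$ is ``any'' left singular vector. This matters when the top singular value is degenerate, or when $m>P$ and there are zero singular values. So I must make sure the paired right vector is constructed from the chosen $u_{\max}$ itself, rather than taken from some fixed SVD. The construction $v_{\max}=J_F^\top u_{\max}/\sigma_{\max}$ handles this uniformly. Its norm computation $\|J_F^\top u_{\max}\|^2=u_{\max}^\top J_FJ_F^\top u_{\max}=\sigma_{\max}^2$ is the one step to verify. An equivalent route avoids $v_{\max}$ altogether: note that $\|J_F^\top g_F\|\,\|J_F^\top u_{\max}\|\ge |\langle J_FJ_F^\top u_{\max}, g_F\rangle| = \sigma_{\max}^2|\langle u_{\max},g_F\rangle|$, then divide by $\sigma_{\max}$.
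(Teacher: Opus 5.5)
Your proof is correct, and it reaches the bound by a leaner route than the paper.

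The paper fixes an SVD $J_F=U\Sigma V^\top$ and splits the loss-side signal in feature space as $g_F=g_\parallel+g_\perp$ along $u_1$. It then asserts that $J_F^\top g_\parallel$ and $J_F^\top g_\perp$ lie in orthogonal right-singular subspaces. Pythagoras gives $\|J_F^\top g_F\|^2=\sigma_1^2|\langle g_F,u_1\rangle|^2+\|J_F^\top g_\perp\|^2$, and the paper drops the remainder.

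You instead work directly in parameter space: one Cauchy--Schwarz step against the unit test vector $v_{\max}=J_F^\top u_{\max}/\sigma_{\max}$. This needs only the eigen-relation $J_FJ_F^\top u_{\max}=\sigma_{\max}^2u_{\max}$. The two arguments compute the same projection from opposite sides, since the paper's $J_F^\top g_\parallel$ is exactly $\langle J_F^\top g_F,v_1\rangle v_1$.

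Your version buys two things:
\begin{itemize}
\item It needs neither a full SVD nor the orthogonality-of-images claim, which the paper states without expanding $J_F^\top g_\perp=\sum_{i\ge 2}\sigma_i\langle g_\perp,u_i\rangle v_i$.
\item It applies verbatim to whichever top left singular vector was used to define $\mathcal{T}$. The paper's last step identifies $u_1$ with $u_{\max}$, which tacitly assumes the SVD was chosen with $u_1=u_{\max}$. That choice is always possible but unstated, and it matters when $\sigma_{\max}$ is degenerate.
\end{itemize}

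The price is that you must treat $\sigma_{\max}=0$ separately, which you do correctly; the paper's decomposition absorbs that case automatically. The paper's formulation also gives an exact account of the slack. The inequality is tight precisely when $J_F^\top g_\perp=0$, that is, when none of the signal orthogonal to $u_1$ is transmitted through the remaining singular directions. This ties the bound directly to the aligned/misaligned reading of transmittance.
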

\begin{proof}
Let $J_F = U \Sigma V^\top$ be a singular value decomposition with singular values
$\sigma_1 \ge \sigma_2 \ge \cdots \ge 0$ and corresponding left singular vectors
$u_1,u_2,\dots$. Decompose $g_F$ along the leading singular direction:
\[
g_F = g_\parallel + g_\perp,
\qquad
g_\parallel = \langle g_F,u_1\rangle\,u_1,
\qquad
g_\perp \perp u_1.
\]
Then $\|g_\parallel\| = |\langle g_F,u_1\rangle|$.

Since $u_1$ is the leading left singular vector of $J_F$, the operator $J_F^\top$ acts on
$g_\parallel$ with gain $\sigma_1$, so
\[
\|J_F^\top g_\parallel\|
=
\sigma_1\,\|g_\parallel\|
=
\sigma_1\,|\langle g_F,u_1\rangle|.
\]
Moreover, $J_F^\top g_\parallel$ and $J_F^\top g_\perp$ lie in orthogonal right-singular
subspaces, hence
\[
\|J_F^\top g_F\|^2
=
\|J_F^\top g_\parallel\|^2
+
\|J_F^\top g_\perp\|^2
\;\ge\;
\|J_F^\top g_\parallel\|^2
=
\sigma_1^2\,|\langle g_F,u_1\rangle|^2.
\]
Taking square roots and multiplying and dividing by $\|g_F\|$ gives
\[
\|\nabla_\theta L(\theta)\|
=
\|J_F^\top g_F\|
\;\ge\;
\sigma_1\,|\langle g_F,u_1\rangle|
=
\sigma_{\max}(J_F)\,
\frac{|\langle g_F,u_1\rangle|}{\|g_F\|}\,
\|g_F\|
=
\sigma_{\max}(J_F)\,\mathcal{T}(\theta)\,\|g_F\|,
\]
where the last equality uses the definition of transmittance in
Eq.~\eqref{eq:transmittance}.
\end{proof}

In particular, the exponential decay of $\|\nabla_\theta L(\theta)\|$ requires at least one of the three factors to decay exponentially.

Standard barren-plateau results are typically phrased in terms of the variance of individual partial derivatives, $\operatorname{Var}_\theta[\partial_k L]$, over a random initialization ensemble, whereas our analysis is formulated pointwise in terms of the full gradient norm. The next corollary provides the bridge between these viewpoints.

\begin{corollary}[Aggregate variance bound]
\label{cor:variance-bridge}
Let $\theta$ be drawn from a parameter distribution $\mathcal D$, and define a covariance matrix
\[
\Sigma_L := \operatorname{Cov}_{\theta\sim\mathcal D}(\nabla_\theta L).
\]
Then
\[
\operatorname{Tr}\Sigma_L
\le
\mathbb E_{\theta\sim\mathcal D}\!\left[
    (\sigma_{\max}(J_F(\theta))\,\|g_F(\theta)\|)^2
\right].
\]
In particular, for every $k=1,\dots,P$,
\[
\operatorname{Var}_{\theta\sim\mathcal D}[\partial_k L]
\le
\operatorname{Tr}\Sigma_L.
\]
\end{corollary}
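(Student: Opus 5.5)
The plan is to prove the trace bound first, using only the pointwise chain-rule upper bound (Lemma~\ref{lem:chain-rule-upper}) and the fact that a covariance is dominated by the corresponding second moment. Write $X := \nabla_\theta L(\theta)\in\mathbb R^P$ as a random vector under $\theta\sim\mathcal D$. The trace of its covariance is
\[
\operatorname{Tr}\Sigma_L = \mathbb E\|X - \mathbb E X\|^2 = \mathbb E\|X\|^2 - \|\mathbb E X\|^2 \le \mathbb E\|X\|^2 .
\]
This identity follows by expanding the square coordinatewise and summing the $P$ diagonal variances. We will assume tacitly that $\mathbb E\|X\|^2<\infty$; otherwise the right-hand side is infinite and the bound is trivial.

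Next, apply Lemma~\ref{lem:chain-rule-upper} pointwise. For every $\theta$ it gives $\|X\|\le \sigma_{\max}(J_F(\theta))\,\|g_F(\theta)\|$. Squaring this (both sides are nonnegative) and taking expectations by monotonicity yields
\[
\mathbb E\|X\|^2 \le \mathbb E\big[(\sigma_{\max}(J_F(\theta))\|g_F(\theta)\|)^2\big].
\]
Combined with the previous display, this is the claimed bound on $\operatorname{Tr}\Sigma_L$.

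For the per-component statement, note that $\operatorname{Var}[\partial_k L] = (\Sigma_L)_{kk}$. Each diagonal entry of a covariance matrix is nonnegative, because it is a variance. Hence any single diagonal entry is at most the sum of all of them, which is $\operatorname{Tr}\Sigma_L$.

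No step is a genuine obstacle. The only points needing care are measure-theoretic. Measurability of $\theta\mapsto\sigma_{\max}(J_F(\theta))\|g_F(\theta)\|$ holds because it is continuous whenever $f$ is $C^1$ and the ansatz is smooth. Finiteness of second moments can be assumed, or handled by treating the bound as holding in $[0,\infty]$.
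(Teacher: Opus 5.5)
Your proof is correct and follows the paper's argument essentially line for line: you bound $\operatorname{Tr}\Sigma_L=\mathbb E\|X\|^2-\|\mathbb E X\|^2$ by the second moment, apply Lemma~\ref{lem:chain-rule-upper} pointwise and take expectations, and then use nonnegativity of the diagonal entries $(\Sigma_L)_{kk}$ for the per-component bound. Your remarks on measurability and on infinite second moments are harmless refinements that the paper leaves implicit.
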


\begin{proof}
Let $X := \nabla_\theta L(\theta)$, where $\theta\sim\mathcal D$. Then
\[
\operatorname{Tr}\Sigma_L
=
\mathbb E\!\left[\|X-\mathbb E[X]\|^2\right]
=
\mathbb E\!\left[\|X\|^2\right]
-
\|\mathbb E[X]\|^2
\le
\mathbb E\!\left[\|X\|^2\right].
\]
Applying Lemma~\ref{lem:chain-rule-upper} pointwise gives
\[
\|X\|^2
=
\|\nabla_\theta L(\theta)\|^2
\le
(\sigma_{\max}(J_F(\theta))\,\|g_F(\theta)\|)^2.
\]
Taking expectations yields
\[
\operatorname{Tr}\Sigma_L
\le
\mathbb E_{\theta\sim\mathcal D}\!\left[
    (\sigma_{\max}(J_F(\theta))\,\|g_F(\theta)\|)^2
\right].
\]
Finally, for each $k=1,\dots,P$,
\[
\operatorname{Var}_{\theta\sim\mathcal D}[\partial_k L]
=
(\Sigma_L)_{kk}
\le
\operatorname{Tr}\Sigma_L,
\]
since $\operatorname{Tr}\Sigma_L$ is the sum of the nonnegative diagonal entries of the covariance matrix $\Sigma_L$.
\end{proof}

\begin{remark}[ Variance-level interpretation]
\label{rem:norm-variance-bridge}
Corollary~\ref{cor:variance-bridge} translates norm-level upper bounds into control of the
aggregate gradient variance
\[
\operatorname{Tr}\Sigma_L
=
\sum_{k=1}^P \operatorname{Var}_{\theta\sim\mathcal D}[\partial_k L],
\]
thereby linking our norm-level analysis to the standard variance-of-partial-derivatives diagnostics used in the barren-plateau literature. The right-hand side,
\[
\mathbb E_{\theta\sim\mathcal D}\!\left[
    (\sigma_{\max}(J_F(\theta))\,\|g_F(\theta)\|)^2
\right],
\]
is the second moment of the pointwise chain-rule upper bound from Lemma~\ref{lem:chain-rule-upper}: it combines model responsivity through $\sigma_{\max}(J_F)$ and loss-side signal through $\|g_F\|$. No transmittance term appears because this is an upper bound: only the largest possible local amplification of the Jacobian enters.

The per-parameter conclusion is intentionally aggregate-to-component: it controls each individual variance through the total variance mass, but does not resolve how that variance is distributed across parameters without further structural assumptions. For the present purpose, however, no converse is needed: if the chain-rule upper bound is exponentially suppressed at random initialization, then the aggregate variance—and hence each individual partial-derivative variance—is exponentially suppressed as well.
\end{remark}

\subsection{Inheriting objectives and an amplification mechanism}
\label{subsec:two-loss-class}
Trainability in the non-affine regime is governed by three factors,
\begin{equation*}
    \text{model responsivity } \sigma_{\max}(J_F), \qquad
    \text{loss-side signal } \|g_F\|, \qquad
    \text{gradient transmittance } \mathcal{T}(\theta).
\end{equation*}
In the affine regime, $g_F$ is constant so the loss contributes no additional state-dependent weighting beyond the Jacobian. For non-affine losses, $g_F$ varies with the features, and the chain rule combines the two effects multiplicatively---so depending on the loss, Jacobian flattening can be either inherited or counteracted.

\begin{corollary}[Smooth/Lipschitz losses inherit exponential gradient suppression]
\label{cor:loss-classes}
Let $L(\theta)=f(F(\theta))$ with $J_F(\theta)$ and $g_F(\theta)$ as above. Assume $f$ is $C^1$ on an open set containing the reachable feature set $\{F(\theta):\theta\in\mathbb{R}^P\}$, and there exists a constant $L_f>0$ (independent of system size) such that
\begin{equation*}
    \|\nabla_F f(F)\| \le L_f
    \qquad
    \text{for all } F \in \{F(\theta):\theta\in\mathbb{R}^P\}.
\end{equation*}
If, under random initialization, the feature-map responsivity is exponentially suppressed in the sense that there exist $\alpha>0$ and $C_J>0$ such that
\[
\mathbb{E}\!\left[\sigma_{\max}(J_F(\theta))\right]\le C_J\,2^{-\alpha n},
\]
then
\begin{equation*}
    \mathbb{E}\big[\|\nabla_\theta L(\theta)\|\big]
    \;\le\;
    L_f\,\mathbb{E}\!\left[\sigma_{\max}(J_F(\theta))\right]
    \;\in\;
    \mathcal{O}(2^{-\alpha n}),
\end{equation*}
so $L$ inherits the same exponential gradient suppression.
\end{corollary}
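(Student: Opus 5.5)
The argument is a pointwise application of Lemma~\ref{lem:chain-rule-upper}, followed by taking expectations over the random initialization. For every $\theta\in\mathbb{R}^P$, the lemma gives
\[
\|\nabla_\theta L(\theta)\| \le \sigma_{\max}\!\big(J_F(\theta)\big)\,\|g_F(\theta)\|,
\qquad g_F(\theta)=\nabla_F f\big(F(\theta)\big).
\]
Since $F(\theta)$ lies in the reachable feature set by definition, the Lipschitz-type hypothesis applies at exactly that point. It gives $\|g_F(\theta)\|\le L_f$ for all $\theta$. The $C^1$ assumption on an open neighbourhood of the reachable set is used only to make $g_F(\theta)$ well defined and to justify the chain rule~\eqref{eq:chain_rule}. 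Combining the two bounds yields the deterministic estimate
\[
\|\nabla_\theta L(\theta)\|\le L_f\,\sigma_{\max}\!\big(J_F(\theta)\big)
\quad\text{for all }\theta.
\]

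Next we take expectations under the initialization distribution. Both sides are nonnegative. $\sigma_{\max}(J_F(\theta))$ is a continuous function of $\theta$, being the operator norm of a continuous matrix-valued map, so it is measurable. The gradient norm is continuous by the $C^1$ assumptions, so it is measurable too. Monotonicity and linearity of expectation then give
\[
\mathbb{E}\big[\|\nabla_\theta L\|\big]\le L_f\,\mathbb{E}\big[\sigma_{\max}(J_F)\big].
\]
Finally we insert the hypothesis $\mathbb{E}[\sigma_{\max}(J_F)]\le C_J 2^{-\alpha n}$ to obtain $\mathbb{E}[\|\nabla_\theta L\|]\le L_f C_J 2^{-\alpha n}$. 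Because $L_f$ and $C_J$ do not depend on $n$, this is $\mathcal{O}(2^{-\alpha n})$.

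There is essentially no technical obstacle; the argument is a two-line chain. The only point needing care is the uniformity of $L_f$ in the system size. The conclusion is an exponential bound only because $L_f$ does not depend on $n$. If the bound on $\|\nabla_F f\|$ held only on the explored region, or grew with $n$ (for instance $\log$-losses near the simplex boundary, or $m=2^n$ coordinates each carrying an $\mathcal{O}(1)$ gradient), the inheritance would fail. This is precisely the amplification loophole discussed next, so I would state the uniform-in-$n$ requirement explicitly.

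Optionally, one could also record the variance-level version via Corollary~\ref{cor:variance-bridge}. If $\mathbb{E}[\sigma_{\max}(J_F)^2]$ is exponentially small, then $\operatorname{Tr}\Sigma_L\le L_f^2\,\mathbb{E}[\sigma_{\max}(J_F)^2]$ is exponentially small as well.
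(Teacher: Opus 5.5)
Your proposal is correct and follows the paper's route: apply Lemma~\ref{lem:chain-rule-upper} pointwise, use the uniform bound $\|g_F(\theta)\|\le L_f$ on the reachable set to get $\|\nabla_\theta L(\theta)\|\le L_f\,\sigma_{\max}(J_F(\theta))$, then take expectations and insert the hypothesis $\mathbb{E}[\sigma_{\max}(J_F)]\le C_J 2^{-\alpha n}$. The paper gives no separate proof because the corollary's displayed inequality is exactly this chain; one small precision in your measurability aside is that continuity of $J_F$ comes from smoothness of the ansatz, not from the $C^1$ assumption on $f$, though measurability holds anyway since derivatives of differentiable maps are Borel measurable.
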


This recovers the bounded-sensitivity transfer result identified in Thanasilp et al.~\cite{thanasilp_subtleties_2023}, now expressed within the present chain-rule framework so as to contrast it with the structurally different beyond-affine case analyzed next.

\begin{proposition}[Negative log-likelihood loss amplification]
\label{prop:nll-amplification}
Let $\mathsf{X}$ be a finite outcome set with $|\mathsf{X}| = N = 2^n$, and let
$p_\theta \in \Delta(\mathsf{X})$ denote the Born probabilities of a
parametrized pure state in some reference basis. Consider the negative
log-likelihood (NLL) loss
\begin{equation*}
    f_{\mathrm{NLL}}(p) = -\sum_x q(x) \log p(x).
\end{equation*}
Assume that, at random initialization in an expressive regime:
\begin{enumerate}
    \item $p_\theta(x) \approx 1/N$ for all $x \in \mathsf{X}$;
    \item the target distribution $q$ is approximately uniform on a support
    of size $s$;
    \item the transmittance $\mathcal{T}(\theta)$ does not vanish
    exponentially in $n$.
\end{enumerate}
Then the feature-space gradient has magnitude
\begin{equation*}
    \|\nabla_p f_{\mathrm{NLL}}(p_\theta)\| \approx \frac{N}{\sqrt{s}},
\end{equation*}
so that for high-entropy data ($s \approx N$), the loss-side signal grows as
$2^{n/2}$, and for sharply concentrated data ($s = \mathcal{O}(1)$), it grows
as $2^n$. Combined with Theorem~\ref{thm:amplification}, this implies that
the parameter-space gradient $\|\nabla_\theta L\|$ need not inherit the
exponential decay of $\sigma_{\max}(J_F)$: the loss-side growth can, in
principle, compensate for Jacobian flattening.
\end{proposition}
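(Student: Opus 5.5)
The plan is a direct computation of the feature-space gradient, followed by substitution into Theorem~\ref{thm:amplification}. Here the measurement interface is the full computational-basis one. The features are $F_x(\rho)=\tr(\rho\,|x\rangle\langle x|)=p_\theta(x)$ for $x\in\mathsf{X}$, so $m=N$ and $g_F=\nabla_p f_{\mathrm{NLL}}$.

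\textbf{Step 1.} Differentiate coordinatewise to obtain
\[
\big(\nabla_p f_{\mathrm{NLL}}(p)\big)_x=-\frac{q(x)}{p(x)},
\qquad\text{hence}\qquad
\|\nabla_p f_{\mathrm{NLL}}(p)\|^2=\sum_x \frac{q(x)^2}{p(x)^2}.
\]
This is a pointwise identity with no approximation. The gradient is well defined because assumption~1 keeps $p_\theta(x)>0$ on the support of $q$.

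\textbf{Step 2.} Insert the two approximations. Assumption~1 replaces $p_\theta(x)$ by $1/N$, giving $\|g_F\|^2\approx N^2\sum_x q(x)^2$. Assumption~2 gives $q(x)\approx 1/s$ on a set of size $s$, so $\sum_x q(x)^2\approx s\cdot s^{-2}=1/s$. Together these give $\|g_F\|\approx N/\sqrt{s}$. Specializing to $s\approx N$ gives $\sqrt N=2^{n/2}$, and to $s=\mathcal{O}(1)$ gives order $N=2^n$.

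To make ``$\approx$'' honest, I would state the assumptions multiplicatively: $p_\theta(x)\in[c_1/N,\,c_2/N]$ and $q(x)\in[c_3/s,\,c_4/s]$ on its support, with constants independent of $n$. The conclusion then becomes two-sided bounds $\|g_F\|=\Theta(N/\sqrt{s})$.

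\textbf{Step 3.} Invoke Theorem~\ref{thm:amplification}, which gives
\[
\|\nabla_\theta L\|\ge\sigma_{\max}(J_F)\,\mathcal{T}(\theta)\,\|g_F\|=\sigma_{\max}(J_F)\,\mathcal{T}(\theta)\,\Theta(N/\sqrt{s}).
\]
By assumption~3, $\mathcal{T}$ is not exponentially small. So if $\sigma_{\max}(J_F)$ decays like $2^{-\alpha n}$, the lower bound scales as $2^{(\beta-\alpha)n}$ up to subexponential factors, where $\beta\in\{1/2,1\}$. Exponential decay is therefore not forced whenever $\alpha\le\beta$.

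This is exactly the claimed ``need not inherit'' statement. It is a possibility result, not a guarantee of trainability. It also contrasts with Corollary~\ref{cor:loss-classes}: here no size-independent $L_f$ exists, since the bound in Step~1 grows with $N$.

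\textbf{Main obstacle.} The algebra is trivial. The hard part is that ``$\approx 1/N$'' must hold uniformly in a relative, multiplicative sense. Under a Haar-like (Porter--Thomas) ensemble this fails: individual $p_\theta(x)$ fluctuate exponentially, and the mean of $1/p^2$ even diverges.

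I would handle this by treating assumption~1 as a hypothesis on the specific $\theta$, not on the ensemble, and proving only the pointwise statement. As a remark, I would note that heavy-tailed fluctuations of $1/p$ only increase $\|g_F\|$, so the lower-bound direction survives. The price is that the estimate becomes a typical-case heuristic rather than a statement in expectation.
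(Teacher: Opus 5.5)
Your argument is correct and is essentially the paper's own. The paper reads off the components $-q(x)/p_\theta(x)$, substitutes $p_\theta(x)\approx 1/N$ and $q(x)\approx 1/s$ on a size-$s$ support to obtain $\|g_F\|\approx N/\sqrt{s}$, and then combines this with Theorem~\ref{thm:amplification} under the non-vanishing-transmittance hypothesis; your multiplicative $\Theta(N/\sqrt{s})$ formulation and the Porter--Thomas caveat sharpen what the paper states only informally, but the route is the same.
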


\begin{remark}[ Epistemic scope]
Proposition~\ref{prop:nll-amplification} is an analytic (oracle-model)
statement about the loss-side signal $\|g_F\|$. It does not address
finite-shot estimation, where the same non-Lipschitz structure inflates
estimator variance and can make the loss effectively uninformative even
when the analytic gradient is large. The interaction between the
amplification mechanism and shot noise is taken up in
Section~\ref{sec:loss_function_duality} and Section~\ref{sec:compressed_interface}.
\end{remark}

%% file: Theorems/corollary_C.tex
\subsection{The null model for transmittance}

The lower bound of Theorem~\ref{thm:amplification} becomes quantitatively informative only if $\mathcal{T}(\theta)$ does not collapse. We adopt an isotropic null model for $\mathcal{T}(\theta)$ on unstructured interfaces: the loss-side signal and the dominant feature-response direction are independent and uniformly oriented. This gives a baseline scale against which structured compressed interfaces can be compared.
\medskip

\begin{lemma}[Null model for gradient transmittance]
Assume that under random initialization of $\theta$:
\begin{enumerate}
    \item the leading left singular vector of $J_F(\theta)$ is uniformly distributed on the sphere $S^{m-1}$, independently of its singular values;
    \item the normalized loss gradient $\hat g_F(\theta) := g_F(\theta)/\|g_F(\theta)\|$ points in a uniformly random direction in $\mathbb{R}^m$, independently of $J_F(\theta)$.
\end{enumerate}
Then $\mathcal{T}(\theta)$ has the same distribution as $|\langle u,v\rangle|$, where
$u,v$ are independent, uniformly random unit vectors in $\mathbb{R}^m$. In particular,
\[
    \mathbb{E}[\mathcal{T}(\theta)] = \Theta(1/\sqrt{m}),
\]
with concentration at this scale.
\end{lemma}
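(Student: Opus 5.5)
The plan is to reduce the statement to a standard fact about the inner product of independent uniform unit vectors, and then compute its first two moments.

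\textbf{Step 1: distributional identity.} By definition, $\mathcal{T}(\theta)=|\langle \hat g_F(\theta),u_{\max}(\theta)\rangle|$ whenever $g_F\neq 0$. Assumption 2 says $\hat g_F$ is uniform on $S^{m-1}$ and independent of $J_F(\theta)$, hence independent of $u_{\max}(\theta)$. Assumption 1 says $u_{\max}$ is uniform on $S^{m-1}$. Therefore $(\hat g_F,u_{\max})$ has the law of a pair $(v,u)$ of independent uniform unit vectors, and $\mathcal{T}\overset{d}{=}|\langle u,v\rangle|$.

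Two technical points need care here.
\begin{itemize}
\item The sign ambiguity of $u_{\max}$ is harmless because of the absolute value.
\item The event $g_F=0$ must have probability zero. This is implicit in Assumption 2, since a direction is only defined when $g_F\neq 0$.
\end{itemize}

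\textbf{Step 2: reduction to one coordinate.} Condition on $u$. By rotation invariance of the uniform law of $v$, the conditional law of $\langle u,v\rangle$ equals that of $v_1$, the first coordinate of a uniform unit vector, for every $u$. So it suffices to study $|v_1|$.

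\textbf{Step 3: moments.} Write $v=Z/\|Z\|$ with $Z\sim\mathcal N(0,I_m)$. By symmetry, $\mathbb{E}[v_1^2]=1/m$, so Jensen's inequality gives the upper bound $\mathbb{E}|v_1|\le 1/\sqrt m$.

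For the lower bound there are two routes.
\begin{itemize}
\item \emph{Exact computation.} The density of $v_1$ is proportional to $(1-t^2)^{(m-3)/2}$ on $[-1,1]$. This yields
\[
\mathbb{E}|v_1|=\frac{\Gamma(m/2)}{\sqrt{\pi}\,\Gamma((m+1)/2)},
\]
and Gautschi's inequality bounds this below by a constant times $1/\sqrt m$. Asymptotically it behaves like $\sqrt{2/(\pi m)}$.
\item \emph{Gaussian argument.} Use $\mathbb{E}|Z_1|=\sqrt{2/\pi}$ and the bound $\|Z\|\le 2\sqrt m$, which holds with probability $1-e^{-cm}$.
\end{itemize}
Either route gives $\mathbb{E}[\mathcal{T}]=\Theta(1/\sqrt m)$.

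\textbf{Step 4: concentration.} I read ``concentration at this scale'' as two-sided tail bounds on $\sqrt m\,\mathcal{T}$.
\begin{itemize}
\item \emph{Upper tail.} Lévy's lemma, or the Gaussian representation, gives $\Pr[|v_1|\ge t/\sqrt m]\le 2e^{-t^2/2}$.
\item \emph{Lower tail.} The density of $\sqrt m\,v_1$ is uniformly bounded, so $\Pr[|v_1|\le \epsilon/\sqrt m]\le C\epsilon$.
\end{itemize}
Together these show $\sqrt m\,\mathcal{T}=\Theta(1)$ with high probability. Moreover, $\sqrt m\,\mathcal{T}$ converges in distribution to $|\mathcal N(0,1)|$.

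\textbf{Main obstacle.} The analytic steps are routine. The hardest part is interpretive: fixing a precise meaning of ``concentration at this scale,'' and stating the lower-tail bound uniformly in $m$. The first thing I would try is to bound the Beta-type density $c_m(1-t^2)^{(m-3)/2}$. Its normalizer satisfies $c_m=O(\sqrt m)$, again by Gautschi's inequality.
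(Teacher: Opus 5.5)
Your proposal is correct and follows the paper's route. The paper's proof sketch likewise uses the two isotropy and independence assumptions to identify $\mathcal{T}(\theta)$ in law with $|\langle u,v\rangle|$, and then simply cites the standard fact $\mathbb{E}|\langle u,v\rangle| = \Theta(1/\sqrt{m})$, with concentration, from Vershynin. Your Steps 2--4 supply that cited fact explicitly, and conditioning on $u$ shows that only the isotropy of $\hat g_F$ is needed, consistent with the paper's later remark on one-sided isotropy.
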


\begin{proof}[Proof sketch]
By assumption, $u_{\max}(J_F(\theta))$ is uniformly distributed on the sphere $S^{m-1}$.
Assumption (ii) states that $\hat g_F(\theta)$ is also uniformly distributed on $S^{m-1}$,
independently of $u_{\max}(J_F(\theta))$. Therefore
\[
    \mathcal{T}(\theta)
    =
    |\langle \hat g_F(\theta),\,u_{\max}(J_F(\theta))\rangle|
\]
has the same distribution as $|\langle u,v\rangle|$, where $u$ and $v$ are independent,
uniformly random unit vectors on $S^{m-1}$. For such $u,v$, one has
\[
\mathbb E\,|\langle u,v\rangle|=\Theta(1/\sqrt m),
\]
with concentration on the same scale; see, e.g., \cite{vershynin_high-dimensional_2018}.
\end{proof}

Under the isotropic baseline, Theorem~\ref{thm:amplification} becomes quantitatively informative: chain-rule amplification remains available when $\|g_F(\theta)\|$ grows faster than $\sqrt{m}$ relative to the decay of $\sigma_{\max}(J_F(\theta))$. This is a necessary scaling check against the unstructured baseline; whether it is actually met on a given interface depends on the loss class, the feature dimension, and the choice of measurement interface. We begin with the hardest case.



%% file: 04-qml-case-study-final-v2.tex
\section{The Exponentially Wide Interface}
\label{sec:loss_function_duality}

The chain-rule framework identifies a genuine amplification mechanism, but on the exponentially wide interface neither loss class yields a trainable regime — amplification-capable losses through the neutralization of their loss-side signal, inheriting losses through their inherent Lipschitz structure. This section is a post-mortem: the beyond-the-boundary mechanism is present in principle but unusable in practice, and this motivates the search for a different interface rather than a different objective alone.

Let $\mathsf{X}=\{0,1\}^n$ be the outcome space of computational-basis measurements, let $p_\theta(x)=|\langle x|\psi(\theta)\rangle|^2$ be the model distribution, and let $q_{\mathrm{data}}(x)$ denote the target. Take the feature map to be the full probability vector
\begin{equation*}
    F(\rho(\theta)) = p_\theta \in \Delta(\mathsf{X}) \subset \mathbb{R}^{2^n},
\end{equation*}
so the feature dimension is $m=2^n$. Learning an unconstrained distribution over $2^n$ outcomes to total variation $\varepsilon$ requires $\Omega(2^n/\varepsilon^2)$ samples \cite{canonne_survey_2020}. Within the three-factor decomposition, the two loss classes fail through different combinations of mechanisms, which we examine in turn.

\subsection{Amplification-capable: maximum likelihood (forward KL)}

Maximum-likelihood training minimizes the negative log-likelihood
\begin{equation*}
    L_{\mathrm{NLL}}(\theta)
    = -\sum_x q_{\mathrm{data}}(x)\,\log p_\theta(x)
    = f_{\mathrm{NLL}}(p_\theta).
\end{equation*}
As discussed in Proposition~\ref{prop:nll-amplification}, $f_{\mathrm{NLL}}$ is analytically
amplification-capable: its feature-space gradient has components
$-q_{\mathrm{data}}(x)/p_\theta(x)$, which diverge as $p_\theta(x)\to 0$, and in a near-uniform
expressive regime one has $\|g_F\|\approx N/\sqrt{s}$ with $N=2^n$ and
$s=|\mathrm{supp}(q_{\mathrm{data}})|$. In this sense, the chain-rule geometry identifies a genuine
loss-side mechanism by which non-Lipschitz objectives can counteract Jacobian flattening.

The exponentially wide interface is, however, hostile to this mechanism. Three independent features
of the regime neutralize its potential gains. First, in practical training one clips or tempers the
loss—replacing $p_\theta(x)$ by $\max\{p_\theta(x),\varepsilon_n\}$ or an equivalent smoothed
variant—to control estimator variance. Under such clipping, $\|\nabla_p f_{\mathrm{NLL}}\|$ is
bounded by $\mathrm{poly}(n,1/\varepsilon_n)$; once $\varepsilon_n$ is at least
inverse-polynomial, as in the bounded-sensitivity assumptions used in
Thanasilp et al.~\cite{thanasilp_subtleties_2023}, NLL is pushed back toward the
Lipschitz/inheriting regime. Second, in the expressive random-initialization regime
$p_\theta(x)\approx 1/N$, the model begins near a broad, nearly uniform distribution over the
exponentially many outcomes, so the rare-event structure on which amplification would have to act is
not yet operationally localized. Third, the interface itself carries an exponential width penalty:
with $m=2^n$ exposed probabilities, the isotropic transmittance baseline scales as
$\mathcal{T}\sim 2^{-n/2}$, suppressing the chain-rule lower bound even when the loss-side signal is large.

This is consistent with recent generative-modeling results: exact explicit objectives and their finite-shot estimators can behave very differently, and in implicit generative models polynomial-shot estimates of explicit losses can become effectively uninformative~\cite{rudolph_trainability_2024}; more generally, concentration at the level of the measured quantities cannot be rescued by classical post-processing~\cite{saem_pitfalls_2025}. The loss-side mechanism is real, but on the exponentially wide interface it is neutralized before it can become operationally useful.

\subsection{Structurally inheriting: adversarial training (reverse KL / JSD)}

In adversarial training, a generator is paired with a discriminator, and with an optimal
discriminator the generator minimizes a divergence closely related to the reverse KL or
Jensen--Shannon (JSD) divergence between $p_\theta$ and $q_{\mathrm{data}}$. A convenient
proxy is the reverse KL
\begin{equation*}
    D_{\mathrm{KL}}(p_\theta\|q_{\mathrm{data}})
    = \sum_x p_\theta(x) \log\!\frac{p_\theta(x)}{q_{\mathrm{data}}(x)}.
\end{equation*}
Differentiating with respect to $p_\theta$ gives
\begin{equation*}
    \frac{\partial}{\partial p(x)} D_{\mathrm{KL}}(p\|q_{\mathrm{data}})
    = \log\!\frac{p(x)}{q_{\mathrm{data}}(x)} + 1.
\end{equation*}
In the near-uniform regime $p_\theta(x)\approx 1/N$ and for data distributions
$q_{\mathrm{data}}$ that are not exponentially spiky, the log term grows at most
polylogarithmically in $N$; in particular $\|\nabla_p D_{\mathrm{KL}}(p\|q_{\mathrm{data}})\|$
is at most polynomial in $n$. The same holds for the JSD generator losses used in qGANs,
which differ from reverse KL only by a smooth reweighting of terms.

Thus reverse-KL/JSD sit firmly in the \emph{Lipschitz/inheriting} class of
Corollary~\ref{cor:loss-classes}: their feature-space gradients are uniformly bounded (up to poly factors), so any exponential decay in the feature-map Jacobian $J_F$ is inherited by the parameter gradients. Unlike NLL, there is no non-Lipschitz amplification to exploit even in the analytic/oracle limit. 

\medskip
The distinction between these two failure modes is academic at $m=2^n$. The question is which remains concrete when the interface itself is compressed.


%% file: 05-compressed-interface-v2.tex
\section{Compressed Feature Maps, Classical Heads, and Representation}
\label{sec:compressed_interface}

The exponentially wide interface forecloses both loss classes. Compression changes the question but not necessarily the answer. A viable interface must satisfy three conditions: its width must be polynomial in $n$; its observables must be resolvable at polynomial shots; and its exposed statistics must be sufficient for the task. Whether any interface satisfies all three simultaneously for a given objective is an open question. This is a representational design question: which statistics the quantum system exposes, and how the loss processes them, together determine whether the three factors of Section \ref{sec:geometry} can be jointly controlled. This section formalizes the design space; the hypothesis that natural tasks admit solutions is taken up in Section \ref{sec:discussion}.

\subsection{Compressed feature maps and classical heads}
\label{subsec:composite_head}

Consider a feature map
\begin{equation*}
    F:\mathcal{D}(\mathcal{H})\to\mathbb{R}^m,\qquad
    \rho\mapsto F(\rho),
\end{equation*}
of dimension $m\ll 2^n$. The coordinates of $F$ can be expectation values of selected observables, marginals or correlators, binned probabilities, or other aggregated, coarse-grained statistics---each a linear combination of Pauli expectations, so the choice of $F$ amounts to selecting which directions in the $4^n$-dimensional operator space the model exposes to the loss. The feature dimension $m$ is the key design parameter: it determines both the size of the measurement interface and the baseline scale of gradient transmittance ($\mathbb{E}[|\mathcal{T}|]\sim 1/\sqrt{m}$ at random orientation).

On top of $F$ we allow a \emph{classical head} that processes the exposed statistics before producing a scalar loss. The head splits naturally into two roles: a preprocessing stage $T$ that may reshape the exposed statistics — through normalization, coarse-graining, or a learned classical map — and a scalar objective $C$ that scores the result. This split makes the framework cover structures that a single $f$ would obscure: qGAN discriminators and hybrid quantum-classical heads fit the same template. Concretely, we write
\begin{equation}
\label{eq:compressed_head}
L(\theta) = C\big(T(F(\rho(\theta)))\big),
\end{equation}
where $T:\mathbb{R}^m\to\mathbb{R}^{m'}$ is a (possibly non-linear) transformation and
$C:\mathbb{R}^{m'}\to\mathbb{R}$ is a scalar objective. The composite map $f := C\circ T$
plays the role of the loss in feature space. By the chain rule,
\begin{equation*}
    \nabla_\theta L(\theta)
    = J_F(\theta)^\top g_F^{\mathrm{eff}}(\theta),
\end{equation*}
where $J_F(\theta)=\partial F/\partial\theta(\theta)\in\mathbb{R}^{m\times P}$ is the feature
Jacobian and
\begin{equation}
\label{eq:compressed_head_effgrad}
    g_F^{\mathrm{eff}}(F)
    := \Big(\frac{\partial T}{\partial F}(F)\Big)^\top \nabla C\big(T(F)\big)
\end{equation}
is the \emph{effective feature-space gradient} induced by the classical head. All results of
Section~\ref{sec:geometry} apply with $g_F$ replaced by $g_F^{\mathrm{eff}}$: in
particular the chain-rule amplification bound becomes
\begin{equation*}
    \|\nabla_\theta L(\theta)\|
    \;\ge\;
    \sigma_{\max}(J_F(\theta))\,
    \mathcal{T}(\theta)\,
    \big\|g_F^{\mathrm{eff}}(\theta)\big\|.
\end{equation*}
Thus, in the compressed setting, trainability is again governed by the three factors of Section~\ref{sec:geometry}---responsivity $\sigma_{\max}(J_F)$, effective loss-side signal $\|g_F^{\mathrm{eff}}\|$, and transmittance $\mathcal{T}$---but the feature dimension $m$ now enters as a design parameter. Reducing $m$ from $2^n$ to $\mathrm{poly}(n)$ removes the dimensional source of exponentially small transmittance. This does not guarantee favorable scaling. The isotropic null model is an initialization-time baseline; structured compressed interfaces need not remain isotropic. Refined anisotropic baselines, controlled by an effective spectral dimension rather than the raw $m$, are developed in Appendix~\ref{app:elliptical_refinements}.

\subsection{The Goldilocks question}


Compression is not merely width reduction; it is a representation choice. The feature map $F$ determines which directions in operator space the model exposes to the loss, and the classical head determines how those exposed statistics are scored. Trainability depends on responsivity, effective loss-side signal, and transmittance — all three shaped by the choice of representation, not by width alone.

This is a Goldilocks question. If the interface is too coarse, it no longer exposes statistics sufficient for the task. If it is too fine, the exponential width obstruction returns. If it is too hard to estimate, the interface is unusable even when the chain-rule geometry is otherwise favorable. Prior work has identified related failure modes in different forms: overly restricted representations may collapse into classically tame operator subspaces or admit learned classical surrogates; overly fine interfaces reintroduce exponential-width concentration; and even favorable interfaces can fail through measurement-level concentration that no classical post-processing can rescue~\cite{schreiber_classical_2023,cerezo_does_2025,rudolph_trainability_2024,saem_pitfalls_2025}. Whether the complement of these failure regions is nonempty for natural tasks is the hypothesis taken up in Section~\ref{sec:discussion}. Section~\ref{sec:hydro_case_study} probes one candidate: a joint-block interface on charge-conserving local dynamics, chosen to test whether the three factors can be jointly controlled in a physically motivated setting.

%% file: 06-hydro-case-study-rev2.tex
\section{Numerical Demonstration on a Compressed Non-Affine Interface}
\label{sec:hydro_case_study}

In this section we study a concrete compressed, non-affine learning task built from charge-conserving local dynamics. The relevant structure is naturally expressed through coarse density profiles and their correlations rather than full bitstring distributions, making the setting a useful testbed for a polynomial-width interface.

\subsection{Joint-block interface for charge-conserving dynamics}
\label{subsec:hydro_setup}

Concretely, we study a local $U(1)$-conserving circuit initialized in a domain-wall state, which provides a spatially inhomogeneous initial condition, and define the task on the joint distribution of block Hamming weights across $b$ contiguous regions,
\begin{equation*}
    F(\rho) = p_\rho(w_1,\dots,w_b).
\end{equation*}

The polynomial width of this interface follows directly from the block structure: for fixed $b$,
\begin{equation*}
    m(n) = \prod_{j=1}^{b}(|B_j|+1)
    \sim \left(\frac{n}{b}+1\right)^{b},
\end{equation*}
so the feature dimension avoids the exponentially wide regime in which $m \sim 2^n$.

This interface lies between two extremes. It is richer than a small collection of global observables, since it retains coarse spatial structure and cross-block correlations, but it is substantially more compressed than full microstate learning. In particular, it is well suited to the charge-conserving setting because the relevant structure is carried by coarse spatial fluctuations rather than by exact bitstring configurations.

\begin{figure}
    \centering
    \includegraphics[width=\textwidth]{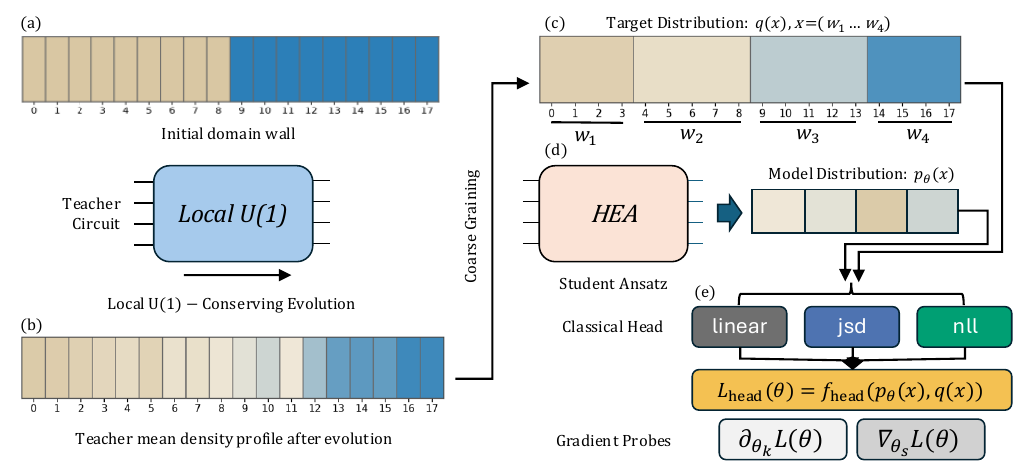}
    \caption{
    Setup of the numerical demonstration. (a) A domain-wall initial state on $n$ qubits (shown here for $n=18$). (b) A local $U(1)$-conserving circuit evolves the state, producing a diffusively broadened mean density profile. (c) The evolved teacher state is mapped to the joint block-weight target distribution $q(x)$, with $x=(w_1,\ldots,w_4)$, over $b=4$ contiguous blocks. (d) An unconstrained hardware-efficient ansatz produces the student distribution $p_\theta(x)$ on the same interface. (e) Three classical heads---linear (affine), JSD (inheriting), and NLL (amplification-capable)---define competing objectives $L(\theta)=f_{\mathrm{head}}(p_\theta(x),q(x))$, whose gradients are probed in single-parameter $(\partial_{\theta_k}L)$ and multi-parameter subspace $(\nabla_{\theta_S}L)$ settings. In the numerics, $q(x)$ is represented by a fixed finite-shot estimate; see Appendix~\ref{app:frontier_defs}.
    }
    \label{fig:hydro_numerics}
\end{figure}

At the same time, polynomial width alone does not guarantee polynomial responsivity. Even when $m(n)=\mathrm{poly}(n)$ for fixed $b$, each block variable still aggregates over $|B_j|=\Theta(n/b)$ qubits, so the exposed observables become increasingly extensive with system size. The central numerical question is therefore not merely whether compression reduces the feature dimension, but whether it does so while leaving enough model-side responsivity and alignment for non-affine amplification to matter.

The teacher is a local $U(1)$-conserving circuit evolved from a domain-wall state, with depth $d_T(n)=\left\lceil (n/4)^2 \right\rceil$ taken as a heuristic scaling choice to keep the task in a roughly comparable intermediate-relaxation regime across system sizes. The prefactor $1/4$ is not optimized. The student is an expressive hardware-efficient ansatz of the same size; we deliberately leave the student unconstrained by $U(1)$ symmetry, since imposing it would inject favorable structure into the variational family and obscure the role of the head and interface in shaping gradients. We compare three classical heads applied to the joint block-weight distribution $p = p_\rho(w_1,\dots,w_b)$ with teacher target $q$:
\begin{equation}
\label{eq:hydro_heads}
    L_{\mathrm{linear}} = -\sum_x q_x\,p_x,
    \qquad
    L_{\mathrm{JSD}} = \tfrac{1}{2}D_{\mathrm{KL}}(p\|m)
        + \tfrac{1}{2}D_{\mathrm{KL}}(q\|m),
    \qquad
    L_{\mathrm{NLL}} = -\sum_x q_x\,\log p_x,
\end{equation}
where $m = (p+q)/2$ and all probabilities are lightly smoothed.\footnote{Each distribution is additively smoothed with $\varepsilon = 10^{-12}$ and renormalized; see Appendix~\ref{app:frontier_defs} for details.} Their feature-space gradients are $\nabla_{p_x}L_{\mathrm{linear}} = -q_x$ (affine), $\nabla_{p_x}L_{\mathrm{JSD}} = \tfrac{1}{2}\log(p_x/m_x)$ (inheriting), and $\nabla_{p_x}L_{\mathrm{NLL}} = -q_x/p_x$ (amplification-capable) — the loss-classes of Corollary~\ref{cor:loss-classes}, Proposition~\ref{prop:nll-amplification} made concrete in this setup.

All finite-shot results are reported at accepted frontiers chosen so that the finite-shot gradient estimates track the corresponding exact-gradient diagnostics at the ensemble level. We summarize resolution by the median circuit-level signal-to-noise ratio $\mathrm{MedSNR}$, and, for the multi-parameter probe, fidelity by the median relative bias $\mathrm{MedRelBias}$ against the exact subspace gradient. The single-parameter frontier is the smallest shot budget with $\mathrm{MedSNR}\ge 2$; the multi-parameter frontier additionally requires $\mathrm{MedRelBias}\le 0.5$. Formal definitions and threshold choices are given in Appendix~\ref{app:frontier_defs}.

\subsection{Single-parameter finite-shot probe}
\label{subsec:hydro_single_param}

We first study a single parameter direction under finite-shot estimation, in direct correspondence with the standard barren-plateau diagnostic. Holding the circuit family and measurement interface fixed, we compare the three heads defined above.

At matched shot budgets, the non-affine NLL head produces substantially
larger resolved gradients than the affine baseline and JSD
(Fig.~\ref{fig:hydro_singleparam_v21}). At $n=24$, NLL resolved gradients
are approximately $10^4$ times larger than the linear baseline at matched
shot cost. Over the explored system-size window, the linear and JSD curves
are strongly consistent with exponential-like decay, whereas NLL decays
markedly more slowly, with the scaling class quantified in
Section~\ref{subsec:scaling_classification}.

\begin{figure*}[h!]
    \centering
    \includegraphics[width=\textwidth]{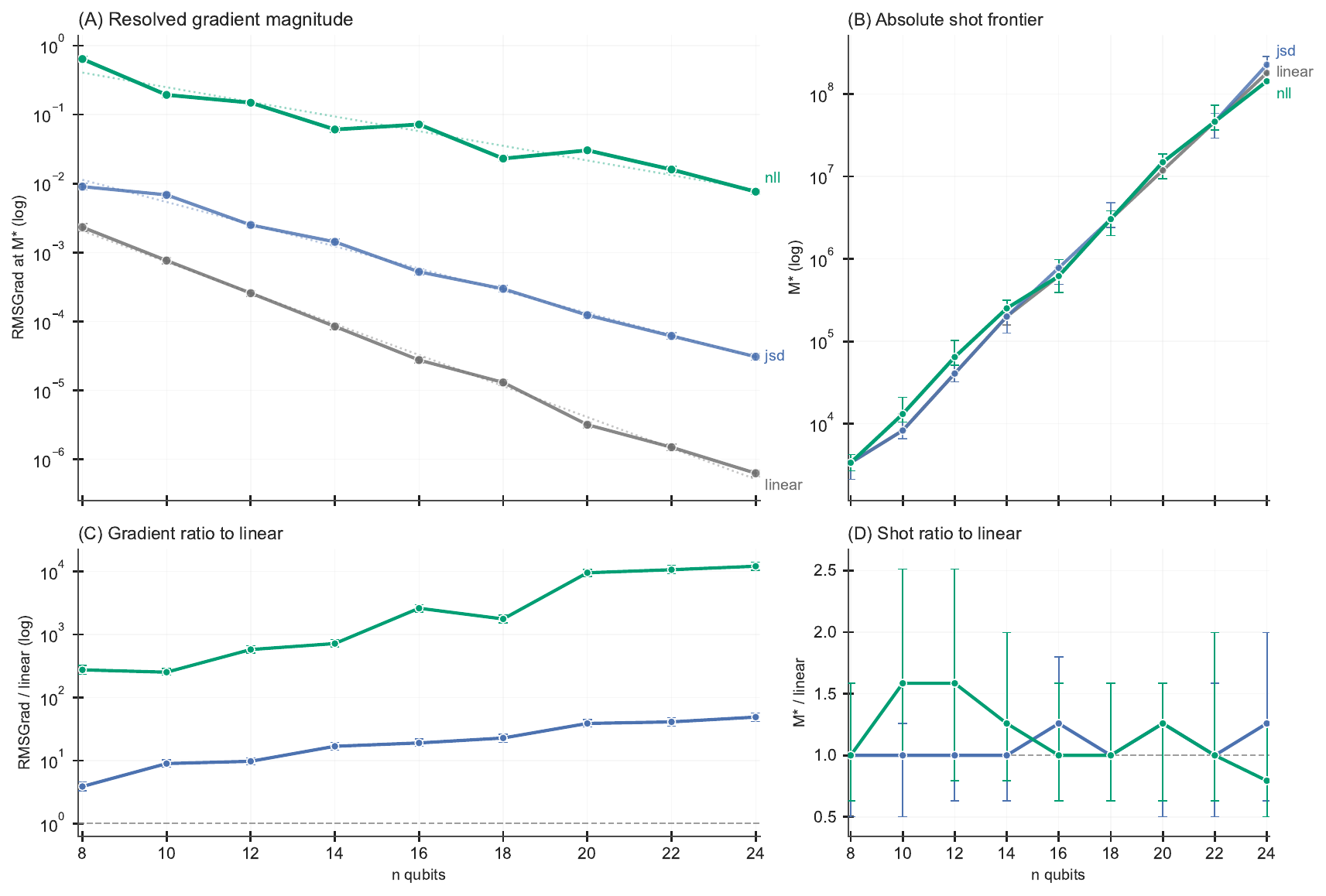}
    \caption{At fixed block-interface width $b=4$, the non-affine NLL head produces resolved gradients roughly $10^4\times$ larger than the linear baseline at $n=24$, while the required shot budgets remain in the same broad exponential-like regime up to a moderate constant-factor overhead. Panel~(A): resolved gradient magnitude at the accepted $\mathrm{MedSNR}\ge 2$ frontier. Panel~(B): absolute shot frontier $M^*$. Panels~(C) and~(D): gradient and shot ratios relative to the linear baseline. Thus the classical head materially changes the resolved training gradient scale without changing the dominant shot-budget scaling class.}
    \label{fig:hydro_singleparam_v21}
\end{figure*}

\subsection{Random coordinate-subset sketch}
\label{subsec:hydro_subspace}

The single-parameter probe isolates the loss-side mechanism but does not capture how gradients are used in optimization, where multiple coordinates move simultaneously. To address this, we introduce a random $s=32$ coordinate-subset sketch: for each circuit instance we evaluate exact parameter-shift gradients on a fixed subset $S$ of parameters, giving a tractable proxy for local multi-parameter trainability.

Within this sketch we estimate the chain-rule factors of
Section~\ref{sec:geometry}: the feature-space gradient norm $\|g_F\|$, the
leading feature-Jacobian scale $\sigma_{\max}(J_{F,S})$, the transmittance
$\mathcal{T}_S$, and the transmitted subspace gradient norm
$\|J_{F,S}^\top g_F\|$
(Fig.~\ref{fig:hydro_chainrule_theorem}). The numerical hierarchy is
consistent: NLL produces the largest feature-space signal, the transmitted
gradient norm is ordered $\mathrm{NLL}\gg\mathrm{JSD}\gg\mathrm{linear}$, and the shot frontiers for this probe (Appendix~\ref{app:frontier_defs}) are larger than the single-parameter frontiers but consistent with the same exponential scaling class. The dominant remaining bottleneck is the model-side responsivity $\sigma_{\max}(J_{F,S})$, which decays strongly and is largely head-independent.

\begin{figure*}[t]
    \centering
    \includegraphics[width=\textwidth]{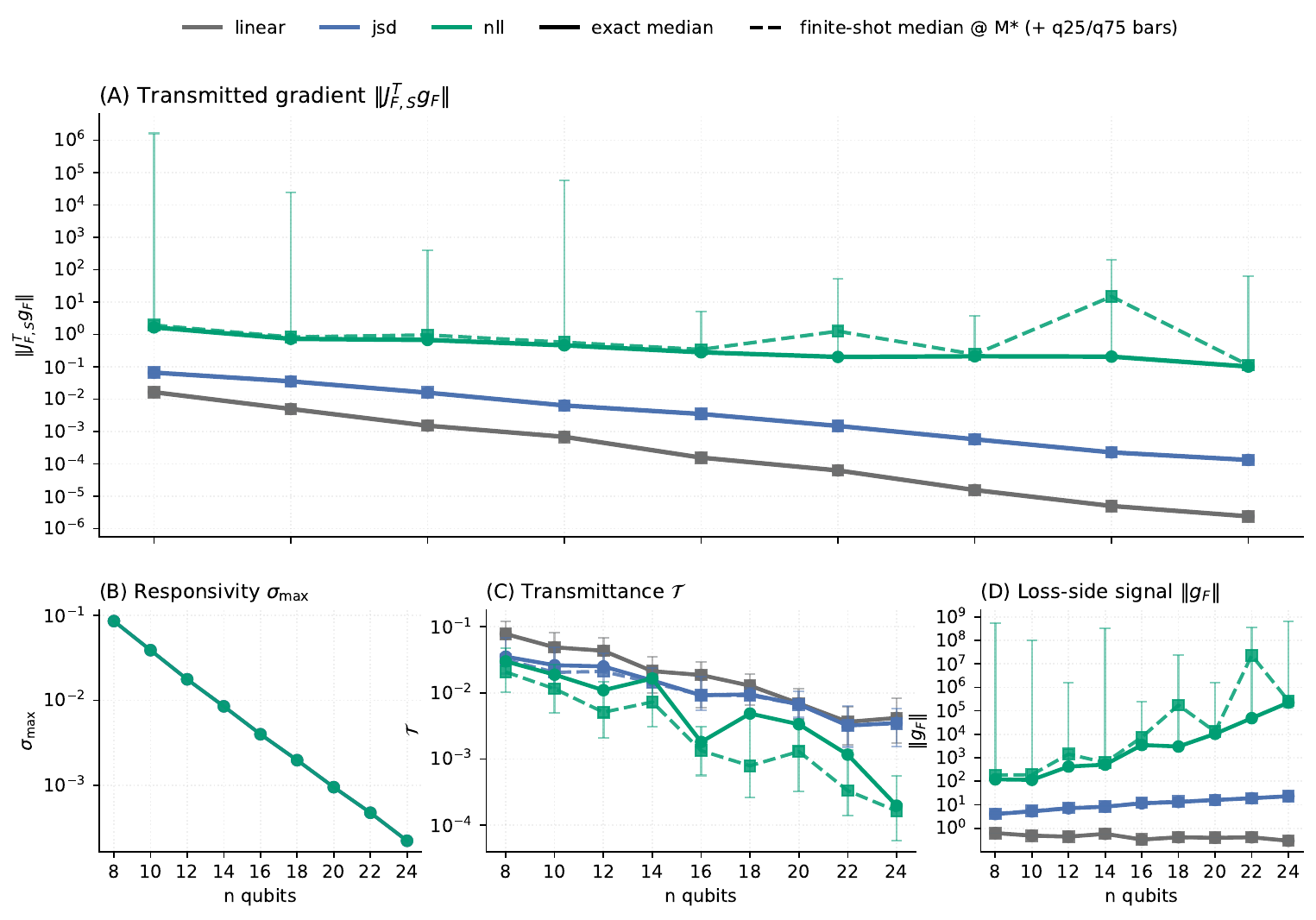}
    \caption{In the chain-rule decomposition of Theorem~\ref{thm:amplification}, the non-affine NLL head produces the largest loss-side signal $\|g_F\|$, the transmitted gradient norm orders as $\mathrm{NLL}\gg\mathrm{JSD}\gg\mathrm{linear}$, and model-side responsivity $\sigma_{\max}(J_{F,S})$ remains the dominant bottleneck across all heads. Panel~(A): transmitted gradient norm $\|J_{F,S}^\top g_F\|$. Panels~(B)--(D): responsivity $\sigma_{\max}(J_{F,S})$, transmittance $\mathcal{T}$, and loss-side signal $\|g_F\|$. Solid lines are exact medians across a 60-circuit ensemble; dashed markers are finite-shot medians at the accepted frontier with $q25/q75$ bars. The predicted hierarchy holds both exactly and under finite-shot estimation, though NLL exhibits broader cross-circuit spread (Appendix~\ref{app:ridgeline}).}
    \label{fig:hydro_chainrule_theorem}
\end{figure*}

Across the decomposition, linear and JSD finite-shot medians track their exact counterparts closely. NLL retains a clear transmitted-gradient
advantage under finite-shot estimation, though with broader cross-circuit spread---reflecting the bimodal ensemble structure visible in Appendix~\ref{app:ridgeline}.

\subsection{Scaling classification}
\label{subsec:scaling_classification}

The mechanistic decomposition explains \emph{why} NLL produces larger
transmitted gradients than linear and JSD on the same interface. We now ask
whether that separation remains within the same exponential scaling class, or
whether the NLL trend departs from it.

\paragraph{Model families.}
We fit four equal-complexity, two-parameter models to the exact transmitted subspace gradient norm as a function of qubit count $n$. Each model takes the form $\log y = \beta_0 + \beta_1\phi(n)$ with a different feature map
$\phi$. These span the standard hierarchy from polynomial through quasi-polynomial to exponential decay. Because all four models have the same number of free parameters, corrected Akaike information criterion (AICc) reduces to a ranking of log-space fit quality up to a common penalty term.

\paragraph{Results.}
Table~\ref{tab:hydro_scaling_aicc} reports $\Delta$AICc relative to the
best-fitting model in each row for the exact multi-parameter
transmitted-gradient trend over $n\in\{8,10,\ldots,24\}$.

\begin{table}[t]
\centering
\renewcommand{\arraystretch}{1.2}
\input{figures/hydro_complexity_tables/multiparam_gradient_aicc_full.tex}
\caption{$\Delta$AICc values for four equal-complexity two-parameter scaling models fitted to the exact transmitted subspace gradient norm $\|J_{F,S}^\top g_F\|$ as a function of qubit count. Linear and JSD decisively favor the exponential class; the NLL trend is not well described by the exponential class over the accessible window but does not identify a unique non-exponential law. Entries are $\Delta$AICc relative to the row-wise winner (shown in bold), so lower is better; $\mathrm{AICc}_{\mathrm{best}}$ gives the absolute AICc value of that winner.}
\label{tab:hydro_scaling_aicc}
\end{table}

For linear and JSD, the exponential fit is isolated from all alternatives by
large $\Delta$AICc gaps---fully consistent with the inherited exponential
suppression expected from Corollary~\ref{cor:loss-classes}. The NLL row is
qualitatively different: the non-exponential candidates remain close to one
another, while the exponential model is separated by a larger $\Delta$AICc
penalty. The data do not identify a unique non-exponential law for NLL over
the accessible window, but they cleanly distinguish the NLL trend from the
exponential class governing linear and JSD.

\subsection{Synthesis}
\label{subsec:hydro_synthesis}

Taken together, the numerics support a precise but limited claim. Once the task is expressed through a polynomial-width compressed interface, the choice of classical head materially changes the gradient landscape: NLL produces substantially larger feature-space and transmitted gradients than linear and JSD, and this advantage survives finite-shot reliability and fidelity checks at the level of median circuit behavior.

At the same time, the present construction does not exhibit a qualitatively different shot complexity class. Over the explored range, the shot frontiers remain exponential-like for all three heads, with the non-affine head incurring a larger but broadly constant-factor overhead rather than a visibly different exponent. The mechanistic decomposition makes the obstruction explicit: model-side responsivity, not the choice of head, sets the shot frontier. An interface whose responsivity $\sigma_{\max}(J_{F,S})$ decays exponentially will fail the estimation constraint regardless of how favorable the loss-side geometry becomes, and this is the regime realized by the present probe.

The specific way responsivity collapses here points to a further nuance. Reducing the feature dimension does not by itself control the scale of the exposed observables: for fixed $b$, each block-weight statistic aggregates over $\Theta(n/b)$ qubits, so the exposed variables become increasingly extensive with system size even though $m(n)$ remains polynomial. Polynomial width is necessary but not sufficient; the scale and organization of the exposed variables likely matter as well. This in turn suggests transport-adapted mode-space interfaces, in which the exposed statistics are intensive linear functionals of the state at controlled wavelengths, as a natural direction for representations that might preserve responsivity while retaining task structure.

The numerical probe thus occupies a specific point in the design space of Section~\ref{sec:compressed_interface}: polynomial in width, rich enough for the amplification mechanism to become visible, but still estimation-limited because responsivity collapses on the chosen interface. Whether representations exist whose responsivity remains polynomial while preserving sufficient task structure is the open design question, stated as a hypothesis in Section~\ref{sec:discussion}.

%% file: figures/hydro_complexity_tables/multiparam_gradient_aicc_full.tex
\begin{tabular}{l c c c c c}
\hline
 & \textsc{poly} & \textsc{power-log} & \textsc{quasi-poly} & \textsc{exp} & AICc$_{\text{best}}$ \\

$\phi(n)$
& $\log n$
& $\log n + \log\log n$
& $(\log n)^2$
& $n$
& --- \\

scaling ($y \sim $)
&  $n^{\beta_1}$
&  $n^{\beta_1}(\log n)^{\beta_1}$
&  $e^{\beta_1(\log n)^2}$
&  $e^{\beta_1 n}$
& --- \\
\hline
linear
& 20.1
& 21.9
& 12.4
& \textbf{0}
& $-28.4$ \\

jsd
& 23.1
& 24.6
& 16.3
& \textbf{0}
& $-34.8$ \\

nll
& 0.06
& \textbf{0}
& 0.98
& 3.97
& $-25.7$ \\
\hline

\end{tabular}

%% file: 07-discussion-rev.tex
\section{Discussion}
\label{sec:discussion}

The obstruction was real, but local. Barren-plateau theorems prove exponential gradient suppression for losses that admit a fixed-observable representation relative to the measurement interface. Theorem \ref{thm:escape} shows this representation exists if and only if the loss is affine in the measured statistics. Beyond the affine regime, gradients are governed by the chain-rule decomposition of Section~\ref{sec:geometry}. The practical question becomes the representation-design problem of Section~\ref{sec:compressed_interface}: whether one can find polynomial-width interfaces on which responsivity, effective loss-side signal, and transmittance remain jointly favorable. The numerical demonstration of Section~\ref{sec:hydro_case_study} provides partial evidence for this mechanism: on a polynomial-width compressed interface, the amplification-capable head produces transmitted gradients whose scaling is statistically distinguishable from the exponential class governing both the affine and inheriting baselines.

We conjecture that the regime beyond this boundary is not empty.

\begin{hypothesis}[Polynomially-Barren \& Just-Right (PB\&J)]
\label{hyp:pbj}
We hypothesize that some physically and algorithmically natural learning tasks admit variational formulations of the form
\begin{equation*}
    L(\theta) = C\!\big(T(F(\rho(\theta)))\big),
\end{equation*}
where the feature map
$F:\mathcal{D}(\mathcal{H}_n)\to\mathbb{R}^{m(n)}$ has polynomial width
$m(n)=\mathrm{poly}(n)$, and there exists a set of random initializations
$A_n$ with
\begin{equation*}
    \Pr_{\theta}(A_n) \ge \frac{1}{\mathrm{poly}(n)}
\end{equation*}
such that for every $\theta \in A_n$:
\begin{enumerate}
    \item the effective loss-side signal is amplification-capable and remains at least inverse-polynomial on the relevant initialization set, i.e.
    \begin{equation*}
        \big\|g_F^{\mathrm{eff}}(\theta)\big\|
        \;\ge\;
        \frac{1}{\mathrm{poly}(n)}
        \qquad
        \text{for all }\theta\in A_n;
    \end{equation*}

    \item the feature-map Jacobian is not already flattened on the chosen
    interface,
    \begin{equation*}
        \sigma_{\max}\!\big(J_F(\theta)\big)
        \;\ge\;
        \frac{1}{\mathrm{poly}(n)};
    \end{equation*}

    \item the induced transmittance is not exponentially suppressed,
    \begin{equation*}
        \mathcal{T}(\theta)
        \;\ge\;
        \frac{1}{\mathrm{poly}(n)}.
    \end{equation*}
\end{enumerate}
Equivalently, on $A_n$ all three chain-rule factors are at worst
polynomially small. Under these conditions,
Theorem~\ref{thm:amplification} yields
\begin{equation*}
    \|\nabla_\theta L(\theta)\|
    \;\ge\;
    \frac{1}{\mathrm{poly}(n)}
    \qquad\text{for all }\theta \in A_n,
\end{equation*}
and hence
\begin{equation*}
    \mathbb{E}_\theta\!\left[\|\nabla_\theta L(\theta)\|\right]
    \;\ge\;
    \frac{1}{\mathrm{poly}(n)}.
\end{equation*}
Thus exponential barren-plateau suppression is not structurally enforced on
the chosen representation.
\end{hypothesis}

This regime lies outside the structural assumptions of existing barren-plateau proofs. The polynomial gradient bounds rule out a specific exponential structural obstruction, not all obstructions; they do not constitute guarantees of efficient optimization, convergence, or computational advantage. The binding remaining constraint in the numerical demonstration is model-side responsivity — a representation-level limitation. Effective theories offer a natural path forward: by construction, they identify low-dimensional representations that capture the relevant degrees of freedom at a given scale, precisely the combination of polynomial width and preserved responsivity that the PB\&J conditions require.
But the structural point stands: the barren-plateau theorems were precise, and their premises were explicit.

\bigskip
Now, the boundary is clear. The question returns to what it ought to be: which representations admit learning, and which do not.

%% file: Appendix/app-iff-rev.tex

\section{Affine objectives and fixed-observable structure}
\label{app:iff}

This appendix provides the formal classification underlying
Theorem~\ref{thm:escape}.  Fix a measurement \emph{interface} given by
linearly independent Hermitian observables $O_1,\dots,O_m$ and the
associated feature map
\[
  F(\rho) := \big(\tr(\rho\,O_1),\dots,\tr(\rho\,O_m)\big) \in \mathbb{R}^m.
\]
We consider objectives of the form $L(\theta) = f(F(\rho(\theta)))$, where
$\theta \mapsto \rho(\theta)$ is a differentiable ansatz and $f$ is differentiable
on an open set $\Omega \subset \mathbb{R}^m$ containing the relevant reachable
features.  Write $F(\theta) := F(\rho(\theta))$ and let
\[
  J_F(\theta) := \frac{\partial F}{\partial \theta}(\theta)
  \in \mathbb{R}^{m \times P}
\]
denote the feature Jacobian.  We say that the ansatz \emph{explores} an open region
$U \subset \Omega$ if $U \subset \{F(\theta) : \theta \in \mathbb{R}^P\}$.

\subsection{Fixed-observable structure relative to an interface}

\begin{definition}[Fixed-observable structure relative to an interface]
\label{def:fixed_struct}
We say that $L(\theta) = f(F(\rho(\theta)))$ admits a \emph{fixed-observable
representation on $U$ relative to the interface $\{O_j\}$} if there exist
$H \in \mathrm{span}\{O_1,\dots,O_m\}$ and $c \in \mathbb{R}$ such that
\[
  L(\theta) \;=\; \tr\!\big(H\,\rho(\theta)\big) + c
\]
for all $\theta$ with $F(\theta) \in U$.
\end{definition}

The restriction $H \in \mathrm{span}\{O_j\}$ is the operational content of ``relative
to the interface'': it captures the idea that the objective depends on the quantum state
only through the a priori measured statistics $F(\rho)$.

\subsection{Proof of Theorem~\ref{thm:escape}}

Theorem~\ref{thm:escape} states that, on any non-empty open set $U$ explored by
the ansatz, a fixed-observable representation exists if and only if $f$ is affine on
$U$.  The main text proves the non-affine $\Rightarrow$ no-fixed-observable direction
by contradiction.  Here we record both directions for completeness.

\begin{proof}[Proof of Theorem~\ref{thm:escape}]

\emph{Affine $\Rightarrow$ fixed observable.}\;
If $f(F) = a^\top F + c$ on $U$, define
$H := \sum_{j=1}^m a_j O_j \in \mathrm{span}\{O_j\}$.
Then for any $\theta$ with $F(\theta) \in U$,
\[
  L(\theta) = f(F(\theta))
  = c + \sum_{j=1}^m a_j\,\tr(\rho(\theta)\,O_j)
  = c + \tr\!\big(H\,\rho(\theta)\big).
\]

\emph{Fixed observable $\Rightarrow$ affine.}\;
Suppose $H \in \mathrm{span}\{O_j\}$ and $c \in \mathbb{R}$ satisfy
$L(\theta) = \tr(H\,\rho(\theta)) + c$ for all $\theta$ with $F(\theta) \in U$.
By linear independence of $\{O_j\}$, $H = \sum_{j=1}^m a_j O_j$ with unique
coefficients $a \in \mathbb{R}^m$.  Then for every such $\theta$,
\[
  f(F(\theta))
  = L(\theta)
  = a^\top F(\theta) + c.
\]
Since the ansatz explores $U$, every $F \in U$ is attained, so $f(F) = a^\top F + c$
for all $F \in U$.  That is, $f$ is affine on $U$.
\end{proof}

\begin{remark}[ Sufficient condition for the exploration hypothesis]
\label{rem:full_row_rank}
If $J_F(\theta)$ has full row rank $m$ on an open parameter region, then
$F$ is a submersion there, and the ansatz explores an open set of feature space.
\end{remark}

\subsection{Scope: fixed versus parameter-dependent observables}

\begin{remark}[ Parameter-dependent observables are post hoc]
\label{rem:theta_dependent_obs}
Concentration-based barren-plateau arguments are formulated for \emph{fixed}
(parameter-independent) operators: one studies expectations of fixed observables
with respect to randomly initialized circuit states.  Allowing observables that
depend explicitly on $\theta$ defines a different regime.  Indeed, for any
differentiable objective one can always construct a $\theta$-dependent operator
$H_{\mathrm{eff}}(\theta)$ that reproduces a given local gradient via
$\partial_{\theta_k}L(\theta) =
\tr(H_{\mathrm{eff}}(\theta)\,\partial_{\theta_k}\rho(\theta))$.
Such constructions are necessarily post hoc: the operator co-evolves with the
objective and encodes the full difficulty of the optimization problem.  They
therefore do not constitute counterexamples to concentration-based barren-plateau
theory, whose proofs rely essentially on $\theta$-independent operators.
\end{remark}

\subsection{Connection to concentration-based barren-plateau proofs}

\begin{remark}[ When barren-plateau machinery applies]
\label{rem:bp_connection}
Theorem~\ref{thm:escape} identifies exactly when the objective itself has
fixed-observable structure relative to a chosen measurement interface:
precisely in the affine regime.  In that regime, standard gate-based identities
(such as parameter-shift rules, when available) express gradient components
through fixed operators, and the usual concentration-based barren-plateau
proof template applies.  Outside that regime, the objective no longer fits
that fixed-observable template, so gradient behavior must be analyzed through
the chain-rule geometry developed in Section~\ref{sec:geometry}.
\end{remark}

%% file: Appendix/app-elliptical.tex

\section{Elliptical refinements for transmittance under compressed affine interfaces}
\label{app:elliptical_refinements}

In the main text, our ``random orientation'' baseline for transmittance is intentionally minimal
and should be read as an initialization-time yardstick: when one of the two directions in feature
space is (Euclidean) isotropic relative to the other, the typical overlap is of order $1/\sqrt{m}$.
After introducing a structured compressed interface $F(\rho)\in\mathbb{R}^m$ and a classical head,
however, Euclidean isotropy in the feature coordinates is not generic and should not be expected to
persist. This appendix records a simple anisotropic refinement that remains analyzable without
committing to task-specific structure. The refinement applies when the compression/exposure is
\emph{affine in operator statistics} (hence governed by linear maps on a high-dimensional isotropic
source), in which case isotropic geometry in the source induces \emph{elliptical} geometry in feature
space. For genuinely non-affine interfaces, comparable quantitative baselines would require additional problem-specific priors and are outside our scope.

\subsection{An elliptically contoured null model and the effective dimension}
\label{app:elliptical_null_model}

We model anisotropy in feature space by a positive semidefinite ``shape'' matrix
$\Sigma\succeq 0$ on $\mathbb{R}^m$. A canonical way to generate an elliptical direction is to take an
isotropic vector and apply a linear map: if $z\sim\mathcal{N}(0,I_m)$ and $x=\Sigma^{1/2}z$, then
$x$ has covariance $\Sigma$. Normalizing yields a random unit direction supported on an ellipsoid:
\begin{equation}
    u \;=\; \frac{\Sigma^{1/2} z}{\sqrt{z^\top \Sigma z}}
    \;\in\; S^{m-1}.
    \label{eq:elliptical_direction}
\end{equation}
When $\Sigma\propto I_m$, this reduces to a uniform direction on $S^{m-1}$, recovering the isotropic
baseline. Under anisotropy, different directions receive different weights according to the spectrum
of $\Sigma$.

In this appendix we focus on the geometric statistic underlying transmittance, namely the overlap of
two directions. Let $u,v\in S^{m-1}$ be independent elliptical directions generated as in
\eqref{eq:elliptical_direction} from independent $z,z'\sim\mathcal{N}(0,I_m)$. Define the RMS overlap
scale
\begin{equation}
    \mathcal{T}_{\mathrm{rms}}(\Sigma)
    \;:=\;
    \Big(\mathbb{E}\big[(u^\top v)^2\big]\Big)^{1/2}.
    \label{eq:Trms_def}
\end{equation}
(We use an RMS notion because it admits a clean expression in terms of second moments; it can be converted to statements about $\mathbb{E}|u^\top v|$ under additional distributional regularity.)

The quantity governing \eqref{eq:Trms_def} is the scale-invariant ``effective dimension''
\begin{equation}
    d_{\mathrm{eff}}(\Sigma)
    \;:=\;
    \frac{\mathrm{Tr}(\Sigma)^2}{\mathrm{Tr}(\Sigma^2)}
    \;=\;
    \frac{\|\lambda\|_1^2}{\|\lambda\|_2^2},
    \label{eq:deff_def}
\end{equation}
where $\lambda=(\lambda_1,\dots,\lambda_m)$ are the eigenvalues of $\Sigma$. Note that
$1\le d_{\mathrm{eff}}(\Sigma)\le \mathrm{rank}(\Sigma)\le m$, with $d_{\mathrm{eff}}(\Sigma)=m$ in
the isotropic case $\Sigma\propto I_m$, and $d_{\mathrm{eff}}(\Sigma)=r$ for a rank-$r$ spectrum with
equal nonzero eigenvalues.

\begin{lemma}[RMS overlap under an elliptical null model (proof sketch)]
\label{lem:elliptical_overlap_rms}
Let $u,v$ be generated by \eqref{eq:elliptical_direction} from independent $z,z'\sim\mathcal{N}(0,I_m)$.
Then
\begin{equation}
    \mathbb{E}\big[(u^\top v)^2\big]
    \;\approx\;
    \frac{\mathrm{Tr}(\Sigma^2)}{\mathrm{Tr}(\Sigma)^2}
    \;=\;
    \frac{1}{d_{\mathrm{eff}}(\Sigma)},
    \label{eq:elliptical_overlap_rms}
\end{equation}
whenever the quadratic forms $z^\top \Sigma z$ and $z'^\top \Sigma z'$ concentrate sharply around
$\mathrm{Tr}(\Sigma)$. In particular, in regimes where $\mathrm{Tr}(\Sigma^2)\ll \mathrm{Tr}(\Sigma)^2$
(i.e.\ $d_{\mathrm{eff}}(\Sigma)\gg 1$), the typical overlap scale satisfies
$\mathcal{T}_{\mathrm{rms}}(\Sigma)\asymp 1/\sqrt{d_{\mathrm{eff}}(\Sigma)}$ up to constant factors.
\end{lemma}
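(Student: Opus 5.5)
The plan is to compute $\mathbb{E}[(u^\top v)^2]$ exactly in terms of the two normalizing quadratic forms, and then use their concentration to replace each form by its mean. Write $u=\Sigma^{1/2}z/\sqrt{Q}$ and $v=\Sigma^{1/2}z'/\sqrt{Q'}$, with $Q:=z^\top\Sigma z$ and $Q':=z'^\top\Sigma z'$. Then
\begin{equation*}
(u^\top v)^2=\frac{(z^\top \Sigma z')^2}{Q\,Q'} .
\end{equation*}

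First we compute the numerator's mean with the denominators frozen. Conditional on $z$, the scalar $z^\top\Sigma z'$ is Gaussian with variance $z^\top\Sigma^2 z$. Hence $\mathbb{E}[(z^\top\Sigma z')^2]=\mathbb{E}[z^\top\Sigma^2 z]=\mathrm{Tr}(\Sigma^2)$. Similarly $\mathbb{E}[Q]=\mathbb{E}[Q']=\mathrm{Tr}(\Sigma)$.

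Second, we justify replacing $QQ'$ by $\mathrm{Tr}(\Sigma)^2$. Under the stated hypothesis that $Q,Q'$ concentrate sharply around $\mathrm{Tr}(\Sigma)$, write $Q=\mathrm{Tr}(\Sigma)(1+\delta)$ and $Q'=\mathrm{Tr}(\Sigma)(1+\delta')$ with $\delta,\delta'$ small. Then
\begin{equation*}
(u^\top v)^2=\frac{(z^\top\Sigma z')^2}{\mathrm{Tr}(\Sigma)^2}\,(1+O(\delta)+O(\delta')),
\end{equation*}
and taking expectations gives the claimed $\mathrm{Tr}(\Sigma^2)/\mathrm{Tr}(\Sigma)^2=1/d_{\mathrm{eff}}(\Sigma)$.

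To make this quantitative, one can use the Hanson--Wright inequality. It gives $|\delta|\lesssim \|\Sigma\|_F/\mathrm{Tr}(\Sigma)=d_{\mathrm{eff}}^{-1/2}$ with high probability. This is exactly why the regime $d_{\mathrm{eff}}\gg1$ is the natural one. The bad event where $Q$ is small is handled by the trivial bound $(u^\top v)^2\le 1$ together with the Gaussian lower-tail bound for $Q$.

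The main obstacle is controlling the correlation between numerator and denominator. The quantity $(z^\top\Sigma z')^2$ is not independent of $Q$ and $Q'$, so the expectation does not factor. I would handle this by a Cauchy--Schwarz split, $\mathbb{E}[N\cdot \delta]\le \sqrt{\mathbb{E}N^2}\sqrt{\mathbb{E}\delta^2}$, where $N$ is the numerator. Using Gaussian fourth moments, $\mathbb{E}N^2=O(\mathrm{Tr}(\Sigma^2)^2)$, so the correction is of relative order $d_{\mathrm{eff}}^{-1/2}$. This yields $\mathcal{T}_{\mathrm{rms}}(\Sigma)\asymp d_{\mathrm{eff}}(\Sigma)^{-1/2}$ up to constants.

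As a consistency check, the isotropic case $\Sigma=I_m$ gives exactly $1/m$. This follows by rotational invariance, since $\mathbb{E}[(u^\top v)^2]=\mathbb{E}[u_1^2]=1/m$, and it matches the null-model lemma.
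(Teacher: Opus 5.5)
Your proposal follows the paper's sketch: the same ratio representation of $u^\top v$, the same second-moment identity $\mathbb{E}[(z^\top\Sigma z')^2]=\mathrm{Tr}(\Sigma^2)$ (you condition on $z$ where the paper contracts indices), and the same replacement of the concentrated forms $z^\top\Sigma z$ and $z'^\top\Sigma z'$ by $\mathrm{Tr}(\Sigma)$. The paper simply substitutes typical values, whereas you also control the numerator--denominator correlation and the small-$Q$ event via a lower-tail bound and Cauchy--Schwarz; this step is sound, since $\mathbb{E}[(z^\top\Sigma z')^4]\le 9\,\mathrm{Tr}(\Sigma^2)^2$ and $\mathbb{E}\delta^2=2/d_{\mathrm{eff}}$, and it upgrades the heuristic $\approx$ to $\mathbb{E}[(u^\top v)^2]=d_{\mathrm{eff}}^{-1}\bigl(1+O(d_{\mathrm{eff}}^{-1/2})\bigr)$.
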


\begin{proof}[Proof sketch]
We denote
\[
u^\top v
=
\frac{z^\top \Sigma z'}{\sqrt{(z^\top \Sigma z)(z'^\top \Sigma z')}}.
\]
The numerator has mean zero and
\[
z^\top \Sigma z'=\sum_{i,j}\Sigma_{ij} z_i z'_j.
\]
Using independence of $z$ and $z'$ together with
$\mathbb E[z_i z_k]=\delta_{ik}$ and $\mathbb E[z'_j z'_l]=\delta_{jl}$,
we obtain
\[
\mathbb E\big[(z^\top \Sigma z')^2\big]
=
\sum_{i,j,k,l}\Sigma_{ij}\Sigma_{kl}\,\delta_{ik}\delta_{jl}
=
\sum_{i,j}\Sigma_{ij}^2
=
\mathrm{Tr}(\Sigma^\top\Sigma)
=
\mathrm{Tr}(\Sigma^2),
\]
where the last equality uses symmetry of $\Sigma$. The denominator terms are weighted chi-square quadratic forms with mean
$\mathbb{E}[z^\top\Sigma z]=\mathrm{Tr}(\Sigma)$ and variance
$\mathrm{Var}(z^\top\Sigma z)=2\,\mathrm{Tr}(\Sigma^2)$. Standard concentration for such quadratic
forms implies $z^\top\Sigma z=\mathrm{Tr}(\Sigma)\,(1+o(1))$ (and similarly for $z'$) whenever
$\mathrm{Tr}(\Sigma^2)/\mathrm{Tr}(\Sigma)^2$ is small. Substituting these typical values yields
\eqref{eq:elliptical_overlap_rms}.
\end{proof}

The lemma motivates $d_{\mathrm{eff}}(\Sigma)$ as the correct replacement for the ambient dimension
$m$ when Euclidean isotropy fails but second-moment geometry remains meaningful.

\subsection{A spectral ``menu'' and its implications for transmittance}
\label{app:elliptical_menu}

Table~\ref{tab:elliptical_menu} previews several common spectral shapes for $\Sigma$ and the
corresponding behavior of $d_{\mathrm{eff}}(\Sigma)$ and $\mathcal{T}_{\mathrm{rms}}(\Sigma)$. The
examples should be read as modeling templates: they do not assert that a given interface produces a
particular spectrum, but rather show how different coarse-graining designs can interpolate between a
near-isotropic regime ($d_{\mathrm{eff}}\approx m$) and a low-effective-dimension regime
($d_{\mathrm{eff}}\ll m$).

\begin{table}[t]
\centering
\small
\renewcommand{\arraystretch}{1.4}
\begin{tabular}{llll}
\hline\hline
\textbf{Spectral shape} &
\textbf{Eigenvalues (up to scale)} &
\textbf{Eff.\ dimension} &
\textbf{RMS overlap} \\
\hline

Well-conditioned &
$\lambda_i \in [\lambda_{\min},\lambda_{\max}]$ &
$m/\kappa \lesssim d_{\mathrm{eff}} \le m$ &
$\lesssim \sqrt{\kappa/m}$ \\

Low-rank &
$\lambda_1=\cdots=\lambda_r>0$, rest $0$ &
$d_{\mathrm{eff}} = r$ &
$\asymp 1/\sqrt{r}$ \\

Spiked + bulk &
$\lambda_1 = 1+s$, rest $=1$ &
$(m+s)^2/(m+2s+s^2)$ &
$1/\sqrt{m}$ to $O(1)$ \\

Power-law &
$\lambda_i \propto i^{-\alpha}$ &
$m$ to $\Theta(1)$ &
$\asymp 1/\sqrt{d_{\mathrm{eff}}}$ \\

Exponential &
$\lambda_i \propto e^{-ci}$ &
$\Theta(1)$ &
$\Theta(1)$ \\

Block-structured &
block-diagonal $\Sigma$ &
$\sum_b d_{\mathrm{eff}}^{(b)}$ &
block-dependent \\

\hline\hline
\end{tabular}
\caption{Spectral menu for anisotropic null models.}
\label{tab:elliptical_menu}
\end{table}
\subsection{Derivations behind Table~\ref{tab:elliptical_menu}}
\label{app:elliptical_derivations}

\paragraph{Well-conditioned spectra.}
Assume $\lambda_i\in[\lambda_{\min},\lambda_{\max}]$ and define $\kappa=\lambda_{\max}/\lambda_{\min}$.
Using $\mathrm{Tr}(\Sigma^2)=\sum_i \lambda_i^2 \le \lambda_{\max}\sum_i \lambda_i = \lambda_{\max}\mathrm{Tr}(\Sigma)$,
we obtain
\[
d_{\mathrm{eff}}(\Sigma)
=
\frac{\mathrm{Tr}(\Sigma)^2}{\mathrm{Tr}(\Sigma^2)}
\ge
\frac{\mathrm{Tr}(\Sigma)}{\lambda_{\max}}
\ge
\frac{m\lambda_{\min}}{\lambda_{\max}}
=
\frac{m}{\kappa}.
\]
The upper bound $d_{\mathrm{eff}}(\Sigma)\le m$ holds generally. By Lemma~\ref{lem:elliptical_overlap_rms},
$\mathcal{T}_{\mathrm{rms}}\asymp 1/\sqrt{d_{\mathrm{eff}}}$, yielding $\mathcal{T}_{\mathrm{rms}}\lesssim \sqrt{\kappa/m}$.

\paragraph{Low-rank spectra.}
If $\lambda_1=\cdots=\lambda_r>0$ and the rest vanish, then
$\mathrm{Tr}(\Sigma)=r\lambda_1$ and $\mathrm{Tr}(\Sigma^2)=r\lambda_1^2$, hence
$d_{\mathrm{eff}}(\Sigma)=r$ and $\mathcal{T}_{\mathrm{rms}}\asymp 1/\sqrt{r}$.

\paragraph{Spiked + bulk.}
For $\lambda_1=1+s$ and $\lambda_2=\cdots=\lambda_m=1$,
\[
\mathrm{Tr}(\Sigma)=m+s,
\qquad
\mathrm{Tr}(\Sigma^2)=(1+s)^2+(m-1)\cdot 1 = m+2s+s^2,
\]
so $d_{\mathrm{eff}}(\Sigma)=(m+s)^2/(m+2s+s^2)$ and
$\mathcal{T}_{\mathrm{rms}}\asymp \sqrt{(m+2s+s^2)/(m+s)^2}$.
This explicitly interpolates between the isotropic regime ($s\ll m$) and the strongly spiked regime
($s\gg m$).

\paragraph{Power-law decay.}
Let $\lambda_i \propto i^{-\alpha}$ and define $S_1(m)=\sum_{i=1}^m i^{-\alpha}$ and
$S_2(m)=\sum_{i=1}^m i^{-2\alpha}$. Then $d_{\mathrm{eff}}(\Sigma)\asymp S_1(m)^2/S_2(m)$.
Standard integral comparisons give the asymptotics:
\[
S_1(m)\asymp
\begin{cases}
m^{1-\alpha}, & \alpha<1,\\
\log m, & \alpha=1,\\
1, & \alpha>1,
\end{cases}
\qquad
S_2(m)\asymp
\begin{cases}
m^{1-2\alpha}, & \alpha<\tfrac12,\\
\log m, & \alpha=\tfrac12,\\
1, & \alpha>\tfrac12.
\end{cases}
\]
Combining yields the piecewise scaling in Table~\ref{tab:elliptical_menu}. In particular, for
$\tfrac12<\alpha<1$ we obtain $d_{\mathrm{eff}}(\Sigma)\asymp m^{2-2\alpha}$, while for $\alpha>1$
both sums converge and $d_{\mathrm{eff}}(\Sigma)=\Theta(1)$.

\paragraph{Exponential decay.}
If $\lambda_i \propto e^{-c i}$, then both $\sum_i \lambda_i$ and $\sum_i \lambda_i^2$ converge
geometrically as $m\to\infty$, hence $d_{\mathrm{eff}}(\Sigma)=\Theta(1)$ and
$\mathcal{T}_{\mathrm{rms}}=\Theta(1)$.

\paragraph{Block-structured (locality-induced) spectra.}
In many physically motivated interfaces, features decompose into multiple weakly coupled blocks,
such as local patches, correlator groups, or low-order marginals supported on bounded regions.
If the induced shape matrix $\Sigma$ is approximately block-diagonal,
$\Sigma \approx \mathrm{diag}(\Sigma^{(1)},\Sigma^{(2)},\ldots)$, then the effective dimension
decomposes additively,
\[
d_{\mathrm{eff}}(\Sigma)\;\approx\;\sum_b d_{\mathrm{eff}}(\Sigma^{(b)}),
\]
up to weighting by the relative spectral mass of each block.
This regime naturally interpolates between low-rank and near-isotropic behavior and is
characteristic of thermal states, locally constrained objectives, and correlator-based
interfaces where information is distributed across many small subsystems rather than
concentrated in a single global mode.

From a design perspective, the practitioner’s task is to identify which spectral regime a given
interface–head pair $(F,f)$ induces—near-isotropic, low-rank, spiked, power-law, or block-structured—
and then read off the corresponding $d_{\mathrm{eff}}$ scaling as a guide to the expected
transmittance behavior.

\subsection{Scope, limitations, and a two-shape remark}
\label{app:elliptical_scope}

This appendix is intended as a refinement for settings where Euclidean isotropy of the loss-side
direction is not a credible baseline after choosing a structured compressed interface and head.
When Euclidean isotropy does hold (as in the one-sided isotropy null used in the main text), the
$1/\sqrt{m}$ overlap scale is recovered regardless of model-side anisotropy, and elliptical
refinements are not needed. Conversely, when the interface/head induces anisotropy, elliptical null
models provide a maximally general, second-moment-controlled baseline for overlap; beyond affine
(exposure) structure, such second-moment descriptions typically fail and quantitative predictions
require task-specific priors or local approximations.

\begin{remark}[ Two anisotropic shapes]
\label{rem:two_shape_elliptical}
The preceding discussion assumed a common shape matrix $\Sigma$ for both directions. A slightly more
general model assigns distinct shapes $\Sigma_u,\Sigma_v\succeq 0$ and samples
$u=\Sigma_u^{1/2}z/\sqrt{z^\top\Sigma_u z}$ and $v=\Sigma_v^{1/2}z'/\sqrt{z'^\top\Sigma_v z'}$.
Under the same quadratic-form concentration heuristic as in Lemma~\ref{lem:elliptical_overlap_rms},
one obtains the RMS approximation
\[
\mathbb{E}\big[(u^\top v)^2\big]
\;\approx\;
\frac{\mathrm{Tr}(\Sigma_u\Sigma_v)}{\mathrm{Tr}(\Sigma_u)\,\mathrm{Tr}(\Sigma_v)}.
\]
This suggests an ``overlap parameter'' between anisotropies, and recovers
$\mathrm{Tr}(\Sigma^2)/\mathrm{Tr}(\Sigma)^2$ when $\Sigma_u=\Sigma_v=\Sigma$.
We do not develop this refinement further in the main text, but it can be useful when separately
modeling head-induced anisotropy and interface-induced anisotropy.
\end{remark}

%% file: Appendix/app-04-experiment-details.tex
\section{Numerical simulations and supplementary diagnostics}
\label{app:frontier_defs}

This appendix collects numerical details supporting the case
study of Section~\ref{sec:hydro_case_study}. We first provide the formal
definitions of the finite-shot reliability and fidelity statistics used to
define the accepted shot frontiers. We then give a supplementary diagnostic
for the broad finite-shot spread of the NLL estimator in the full-subspace
theorem-aligned run.

\subsection{Setup and notation}

Fix a system size~$n$, a classical head (linear, JSD, or NLL), and a shot
budget~$M$. Across all reported system sizes, the teacher target distribution
is sampled at a fixed budget of $2 \times 10^5$ shots, so that teacher
estimation does not introduce an additional size-dependent variable into the
comparison.

For each circuit~$c$, let $r = 1, \ldots, R$ index independent finite-shot
repetitions at budget~$M$, each producing a gradient estimate. We write
$\bar{g}_c(M)$ for the repetition-averaged gradient of circuit~$c$:
\begin{equation*}
    \bar{g}_c(M)
    \;:=\;
    \frac{1}{R} \sum_{r=1}^{R} \hat{g}_{c,r}(M).
\end{equation*}
In the single-parameter setting, $\hat{g}_{c,r}(M) \in \mathbb{R}$ is a scalar
directional derivative. In the multi-parameter setting,
$\hat{g}_{c,r}(M) \in \mathbb{R}^s$ is a subspace gradient vector over $s$
coordinates.

\subsection{Single-parameter signal-to-noise ratio}

The canonical single-parameter probe uses $C = 200$ independent random circuit
instances. Each circuit is evaluated at $R = 30$ independent finite-shot
repetitions per shot budget; all reported frontier points $M^*(n)$ fall within
this regime.

For each circuit~$c$, define the circuit-level signal-to-noise ratio
\begin{equation*}
    \mathrm{SNR}^{\mathrm{single}}_c(M)
    \;:=\;
    \frac{|\bar{g}_c(M)|}
         {\sqrt{\widehat{\mathrm{Var}}_r\!\bigl(\hat{g}_{c,r}(M)\bigr)}},
\end{equation*}
where $\widehat{\mathrm{Var}}_r$ denotes the sample variance across
repetitions. The numerator measures the magnitude of the averaged signal; the
denominator measures the scale of shot-to-shot fluctuation for that circuit.
The ensemble summary is the median across circuits:
\begin{equation*}
    \mathrm{MedSNR}^{\mathrm{single}}(M)
    \;:=\;
    \operatorname{median}_{c \in [C]}\;
    \mathrm{SNR}^{\mathrm{single}}_c(M).
\end{equation*}
We use the median rather than the mean to ensure that the summary reflects
typical-circuit behavior and is not dominated by a small number of
high-variance or high-signal outliers.

The single-parameter accepted frontier is
\begin{equation}
\label{eq:frontier_single}
    M^{*}_{\mathrm{single}}(n)
    \;:=\;
    \min\bigl\{
        M :\;
        \mathrm{MedSNR}^{\mathrm{single}}(M) \ge \kappa
    \bigr\},
\end{equation}
with $\kappa = 2$ in all reported experiments. This threshold ensures that, for a
typical circuit, the averaged directional gradient is at least twice its
shot-noise scale.

\subsection{Multi-parameter signal-to-noise ratio}

The full-subspace study uses a $C = 60$ circuit ensemble with a fixed subspace
size $s = 32$, and evaluates the finite-shot frontier using $R = 200$
independent repetitions per circuit/subspace group.

In the subspace setting, the gradient estimate is vector-valued:
$\hat{g}_{c,r}(M) \in \mathbb{R}^s$. The circuit-level signal-to-noise ratio
generalizes to
\begin{equation*}
    \mathrm{SNR}^{\mathrm{multi}}_c(M)
    \;:=\;
    \frac{\|\bar{g}_c(M)\|_2}
         {\sqrt{\displaystyle\sum_{j=1}^{s}
           \widehat{\mathrm{Var}}_r\!\bigl(\hat{g}_{c,r,j}(M)\bigr)}},
\end{equation*}
where the denominator aggregates the coordinate-wise shot-noise variances into
a scalar noise scale for the full subspace object. The ensemble summary is
again the median:
\begin{equation*}
    \mathrm{MedSNR}^{\mathrm{multi}}(M)
    \;:=\;
    \operatorname{median}_{c \in [C]}\;
    \mathrm{SNR}^{\mathrm{multi}}_c(M).
\end{equation*}

\subsection{Multi-parameter relative bias}

Signal-to-noise measures whether the gradient estimate is \emph{resolved}
relative to its own shot noise. It does not measure whether the estimate is
\emph{faithful} to the true gradient---a resolved estimate can be
systematically biased if, for example, the non-linear head distorts the
gradient under finite sampling. To guard against this, we introduce a
complementary fidelity statistic.

Let $g_c^{\mathrm{exact}} \in \mathbb{R}^s$ denote the exact subspace gradient
for circuit~$c$, obtained from the saved exact chain-rule sketch payload
(exact conditional on the fixed sampled teacher target used throughout the
study). The circuit-level relative bias is
\begin{equation*}
    \mathrm{RelBias}_c(M)
    \;:=\;
    \frac{\|\bar{g}_c(M) - g_c^{\mathrm{exact}}\|_2}
         {\|g_c^{\mathrm{exact}}\|_2 + \varepsilon},
\end{equation*}
where $\varepsilon > 0$ is a small stabilizing constant that prevents division
by zero for circuits whose exact gradient is near-vanishing. The ensemble
summary is
\begin{equation*}
    \mathrm{MedRelBias}(M)
    \;:=\;
    \operatorname{median}_{c \in [C]}\;
    \mathrm{RelBias}_c(M).
\end{equation*}
A value of $\mathrm{MedRelBias}(M) \le \tau$ ensures that, for a typical
circuit, the finite-shot subspace gradient deviates from the exact value by at
most a fraction~$\tau$ of its magnitude.

\subsection{Multi-parameter accepted frontier}

The multi-parameter frontier imposes both reliability and fidelity
simultaneously:
\begin{equation}
\label{eq:frontier_multi}
    M^{*}_{\mathrm{multi}}(n)
    \;:=\;
    \min\bigl\{
        M :\;
        \mathrm{MedSNR}^{\mathrm{multi}}(M) \ge \kappa,
        \;\;
        \mathrm{MedRelBias}(M) \le \tau
    \bigr\},
\end{equation}
with $\kappa = 2$ and $\tau = 0.5$ in all reported experiments. The SNR
condition ensures that the subspace gradient is resolved above shot noise; the
RelBias condition ensures that the resolved estimate has not drifted
substantially from its exact counterpart. The joint condition is strictly
more demanding than either criterion alone.

\subsection{Choice of thresholds}

The thresholds $\kappa = 2$ and $\tau = 0.5$ are operational conventions, not fundamental constants. $\kappa = 2$ requires the signal to be twice the noise
scale, ensuring the directional gradient can be clearly resolved above shot-to-shot fluctuation. $\tau = 0.5$ permits up to 50\%
relative deviation from the exact gradient, which is deliberately permissive:
the goal is to exclude gross distortion while remaining compatible with the
noise levels inherent in moderate shot budgets. The qualitative conclusions of Section~\ref{sec:hydro_case_study} — specifically, the ordering of gradient magnitudes across heads and the scaling classification — appear stable under moderate threshold variation in our tests.

\subsection{Distributional structure of the NLL finite-shot estimator}
\label{app:ridgeline}

The $q25/q75$ bars in Fig.~\ref{fig:hydro_chainrule_theorem} compress a
genuinely broad and non-Gaussian cross-circuit distribution for the NLL
finite-shot estimator. To make this explicit, Fig.~\ref{fig:appendix_ridgeline_nll}
shows a ridgeline view of the per-circuit finite-shot distributions at the
same accepted joint frontier $M^*(n)$ used in the main-text theorem figure.

The key point is interpretive. The broad NLL uncertainty in the theorem-aligned decomposition is not well described as a narrow unimodal cloud around the exact median. Instead, the finite-shot distribution develops visible multimodality on the 60-circuit ensemble, especially on the transmitted-gradient quantity. This is why the median remains informative for typical-circuit behavior while the interquartile spread becomes much broader than for the linear and JSD baselines.

\begin{figure}[t]
    \centering
    \includegraphics[width=\linewidth]{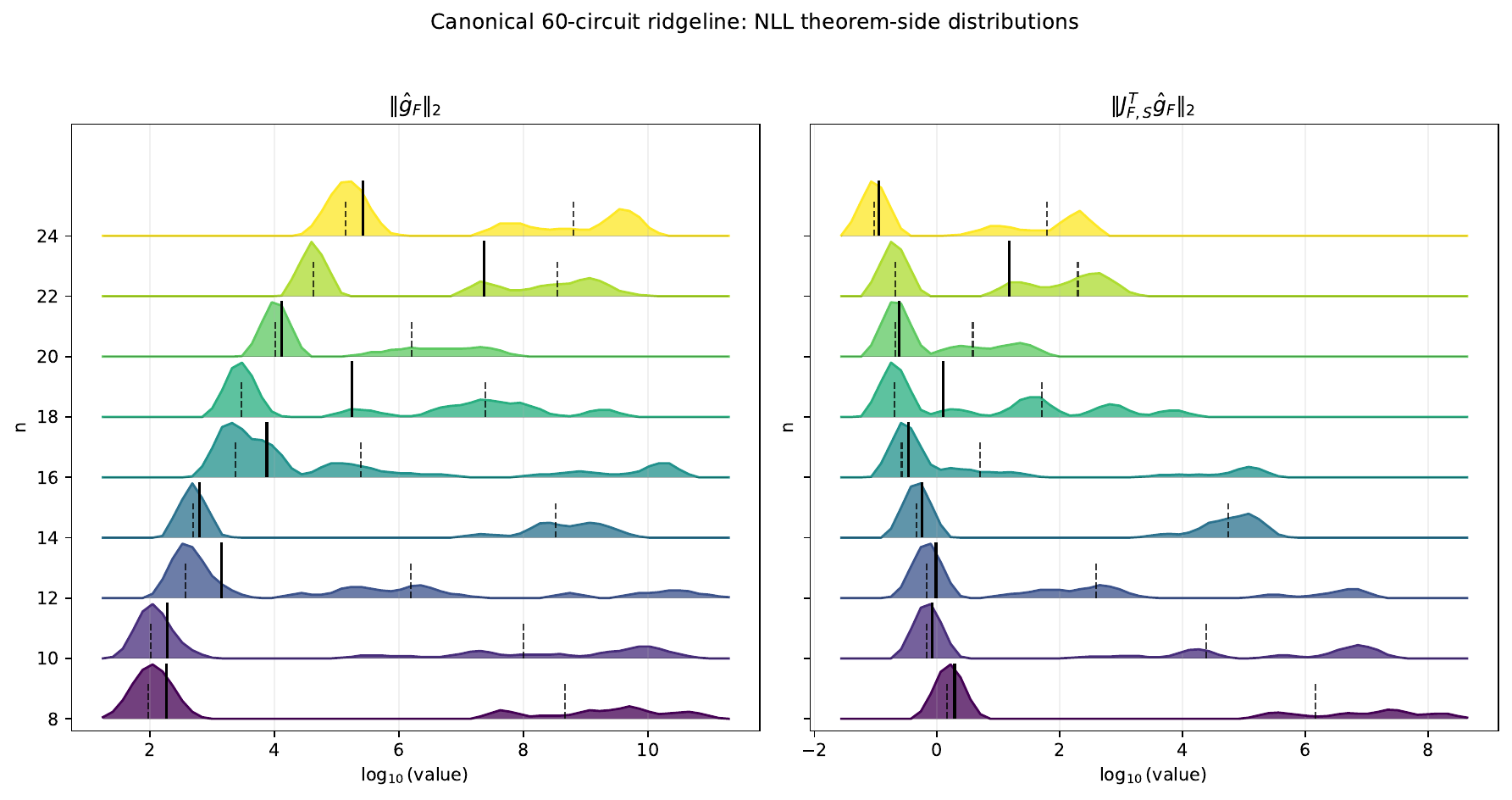}
    \caption{Ridgeline view of the per-circuit finite-shot NLL
    theorem-estimator distribution on the canonical 60-circuit full-subspace
    run for $n=8,\dots,24$ at the same joint frontier $M^*(n)$
    ($\mathrm{MedSNR} \ge 2$, $\mathrm{MedRelBias} \le 0.5$, $b=4$).
    Densities are shown on a $\log_{10}$ scale for $\|\hat g_F\|$ and
    $\|J_{F,S}^T \hat g_F\|$. The multimodal structure explains why NLL can
    show modest lower-quartile behavior while the median and upper quartile
    broaden sharply at some system sizes.}
    \label{fig:appendix_ridgeline_nll}
\end{figure}

\subsection{Multi-parameter shot frontiers}
\label{app:shot_frontiers_multi}
Fig.~\ref{fig:appendix_shotfrontier_multi} reports the accepted multi-parameter shot frontiers $M^*(n)$ used in the theorem-aligned decomposition of Section~\ref{sec:hydro_case_study}. The frontiers are obtained under the joint criterion of Eq.~\eqref{eq:frontier_multi}, which requires both $\mathrm{MedSNR}^{\mathrm{multi}} \geq 2$
and $\mathrm{MedRelBias} \leq 0.5$ across the $C = 60$ circuit ensemble at each system size. The absolute frontier values and the NLL/linear ratios are reported in panels (A) and (B) respectively.

\begin{figure*}[ht]
    \centering
    \includegraphics[width=\textwidth]{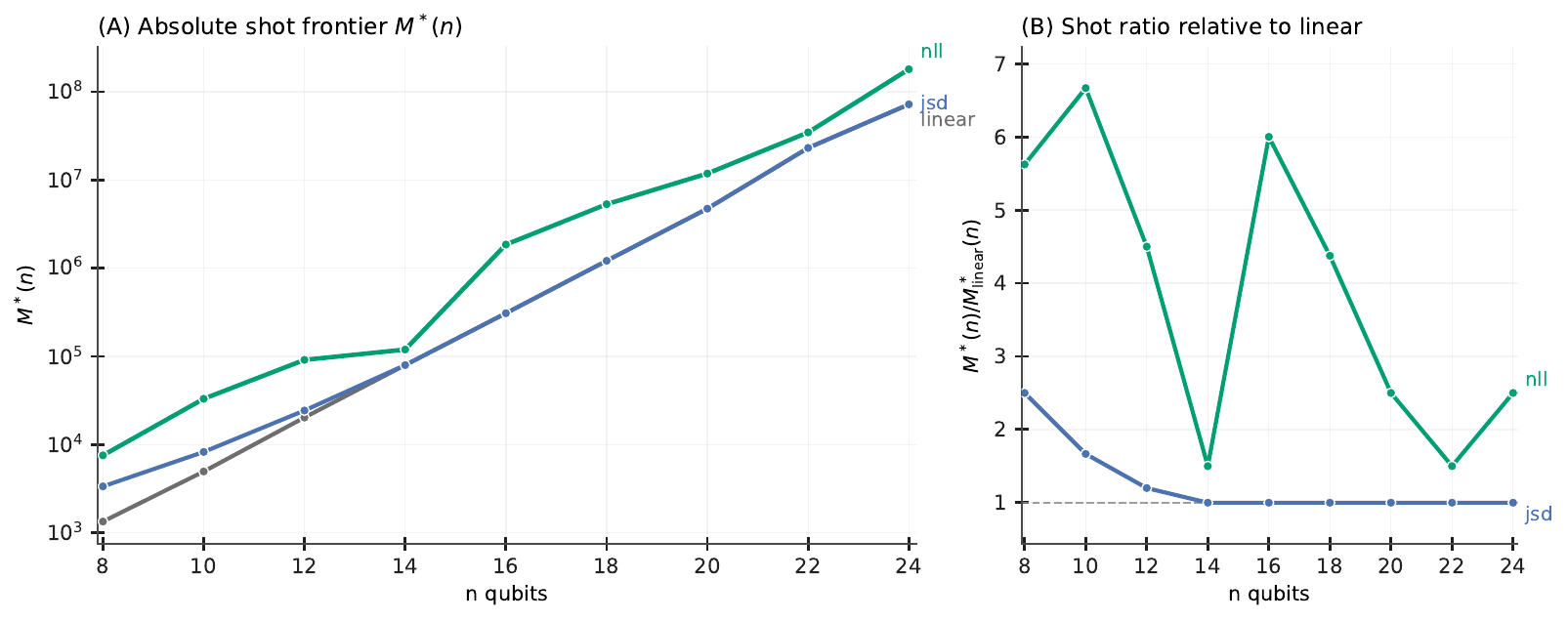}
    \caption{Accepted shot frontiers $M^*_{\mathrm{multi}}(n)$ for the $s=32$ multi-parameter probe at $b=4$, using the joint thresholds $\kappa=2$ and $\tau=0.5$. Panel~(A): absolute shot frontier. Panel~(B): shot ratio relative to the linear baseline. Linear and JSD exhibit essentially the same shot budgets from $n\ge 14$ onward. NLL incurs a variable overhead across the explored range ($1.5$--$6.7\times$), larger and less stable than in the single-parameter probe of Fig.~\ref{fig:hydro_singleparam_v21}. The non-monotone structure is consistent with the multimodal NLL estimator distribution documented in Appendix~\ref{app:ridgeline}. Despite this overhead, the NLL/linear ratio does not grow systematically with $n$, indicating that all three heads remain in the same broad exponential shot-complexity class.}
    \label{fig:appendix_shotfrontier_multi}
\end{figure*}

The multi-parameter frontier exhibits two distinct regimes. For linear and JSD, the shot budgets are essentially indistinguishable from $n=14$ onward, consistent with both heads occupying the same scaling class with matched constants. For NLL, the overhead is larger than in the single-parameter probe (Fig.~\ref{fig:hydro_singleparam_v21}) and exhibits non-monotone variation between $1.5\times$ and $6.7\times$. This variation is driven by the multimodal cross-circuit distribution of the NLL finite-shot estimator, documented in Appendix~\ref{app:ridgeline}: when the finite-shot distribution has heavy or bimodal upper tails, the median reliability criterion can be satisfied at dramatically different shot budgets at nearby system sizes. The important structural observation is that the ratio does not grow systematically with $n$, so all three heads remain in the same exponential shot-complexity class even though NLL's constants are larger and noisier.

\newpage

\subsection{Sensitivity to block count}
\label{app:b6_robustness}

To check the sensitivity of the charge-conserving joint-block results to the
choice of block count, we repeated the single-parameter and full-subspace
probes at $b=6$. The protocol is otherwise unchanged: the same teacher and
student families, the same depth schedules, the same head definitions, and the
same single- and multi-parameter frontier criteria are used throughout.

\begin{figure*}[ht!]
    \centering
    \includegraphics[width=\textwidth]{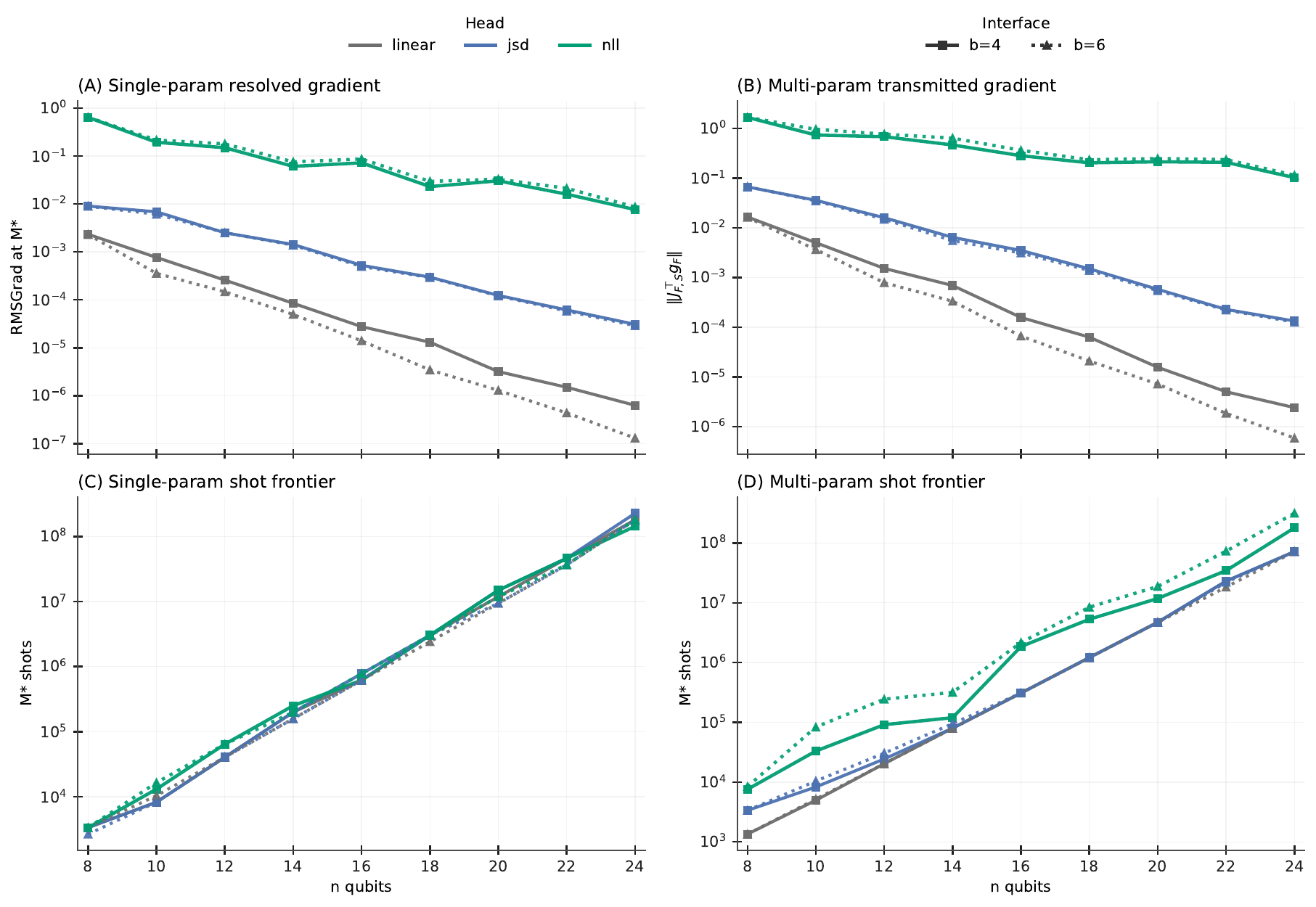}
    \caption{Comparison of the charge-conserving joint-block results for block counts $b=4$ and $b=6$. Solid square markers denote $b=4$ and dotted triangle markers denote $b=6$; colors indicate the classical head. Panel~(A): single-parameter resolved gradient at the accepted frontier $M^*$. Panel~(B): exact transmitted subspace gradient norm $\|J_{F,S}^{\top} g_F\|$. Panel~(C): single-parameter shot frontier $M^*$. Panel~(D): multi-parameter shot frontier $M^*$. Across both probes, the qualitative head ordering and overall scaling trends remain similar for $b=4$ and $b=6$, indicating that the conclusions of Section~\ref{sec:hydro_case_study} are not sensitive to this change in block count over the explored range.}
    \label{fig:appendix_b4_b6_robustness}
\end{figure*}

The comparison in Fig.~\ref{fig:appendix_b4_b6_robustness} leaves the main
interpretation unchanged. NLL continues to produce the largest resolved and
transmitted gradients, while the shot frontiers remain in the same broad
scaling regime as the affine and inheriting baselines. Over the explored
system-size window, changing the block count from $b=4$ to $b=6$ therefore
produces only modest constant-level shifts rather than a qualitative change in
behavior.